\documentclass[lettersize,journal]{IEEEtran}

\usepackage{amsmath,amsfonts}
\usepackage{amssymb}

\usepackage{subfigure}
\usepackage[linesnumbered, ruled, vlined]{algorithm2e}
\usepackage{textcomp} 
\usepackage{stfloats}
\usepackage{url}
\usepackage{verbatim}
\usepackage{graphicx}
\usepackage{cite}

\usepackage{algpseudocode}
\usepackage{picins}
\usepackage{color}

\usepackage{balance}

\newtheorem{proposition}{Proposition}
\newtheorem{theorem}{Theorem}
\newtheorem{corollary}{Corollary}

\newtheorem{remark}{Remark}
\newtheorem{assump}{Assumption}

\newtheorem{definition}{Definition}

\allowdisplaybreaks

\DeclareMathOperator*{\argmin}{arg\,min}

\hyphenation{op-tical net-works semi-conduc-tor IEEE-Xplore}
\def\BibTeX{{\rm B\kern-.05em{\sc i\kern-.025em b}\kern-.08em
    T\kern-.1667em\lower.7ex\hbox{E}\kern-.125emX}}

\begin{document}
\title{Optimal Strategy Revision in Population Games: \\ A Mean Field Game Theory Perspective}
\author{Julian Barreiro-Gomez and Shinkyu Park
\thanks{}
\thanks{The work of Barreiro-Gomez is supported by the Khalifa University Faculty Start-Up (FSU) Grant 8474000774. The work of Park was supported by funding from King Abdullah University of Science and Technology (KAUST). 
} 
\thanks{Barreiro-Gomez is with the Department of Computer \& Information Engineering, Khalifa University, and Khalifa University Center for Autonomous Robotics System (KUCARS), Abu Dhabi, UAE. {\tt julian.barreirogomez@ku.ac.ae}.}
\thanks{Park is with the Department of Electrical and Computer Engineering, King Abdullah University of Science and Technology (KAUST), Thuwal, 23955, Saudi Arabia. {\tt shinkyu.park@kaust.edu.sa}.}
\thanks{Barreiro-Gomez is deeply grateful to God and to the intercession of Our Lady of Lourdes for the blessings of health and life, without which his contributions to this work would not have been possible.}}

\maketitle

\begin{abstract}
This paper investigates the design of optimal strategy revision in Population Games (PG) by establishing its connection to finite-state Mean Field Games (MFG). Specifically, by linking Evolutionary Dynamics (ED)---which models agent decision-making in PG---to the MFG framework, we demonstrate that optimal strategy revision can be derived by solving the forward Fokker-Planck (FP) equation and the backward Hamilton-Jacobi (HJ) equation, both central components of the MFG framework. Furthermore, we show that the resulting optimal strategy revision, which maximizes each agent's payoffs over a finite time horizon, satisfies two key properties: positive correlation and Nash stationarity, which are essential for ensuring convergence to the Nash equilibrium. This convergence is then rigorously analyzed and established. Additionally, we discuss how different design objectives for the optimal strategy revision can recover  existing ED models previously reported in the PG literature. Numerical examples are provided to illustrate the effectiveness and improved convergence properties of the optimal strategy revision design.

\end{abstract}

\begin{IEEEkeywords}
Population Games, Mean Field Games, Multi-Agent Strategic Interactions
\end{IEEEkeywords}

\section{Introduction}
Large-population game frameworks are widely applicable across engineering domains such as multi-agent task allocation, demand response in smart grids, wireless communication networks, distributed control systems, and large-scale optimization problems \cite[§III]{Park2019Payoff-Dynamic-}, \cite{7823106}, where numerous agents engage in strategic interactions. In particular, evolutionary dynamics (ED) describe decision-making models that enable large populations of agents to adapt and learn strategies that converge to a Nash equilibrium in population games (PG). Furthermore, mean field games (MFG) offer an analytically tractable framework for solving stochastic differential games involving large populations \cite{Lasry2007,Caines}.

Both PG and MFG provide rigorous frameworks for designing decision-making models applicable to scenarios involving finite but large populations of agents. Although they share this overarching objective, they differ fundamentally in their modeling perspectives. The PG framework focuses on deriving decision models for large populations of agents engaged in myopic decision-making based on instantaneous payoffs \cite{Sandholm2010-SANPGA-2}. These models are governed by \textit{strategy revision protocols}, which specify how agents switch between strategies based on the payoffs associated with their current strategy choices. In contrast, the MFG framework aims to compute solutions that optimize long-term cost functions dependent on the evolving population's state in either deterministic or stochastic differential game settings, considered in the population limit \cite{Lacker2016}.

A substantial body of literature on PG has explored a variety of decision-making models and developed techniques for establishing their stability (e.g., \cite{HOFBAUER20091665, HOFBAUER200747, Fox2013Population-Game, 9219202, 10933512,EDM_water_passivity}). Building on these foundations, recent advances have introduced higher-order dynamics models to refine payoff evaluations \cite{Fox2013Population-Game, park2018cdc}. Such models can, for example, be designed to filter noise in perceived payoffs \cite{Park2019Payoff-Dynamic-} or mitigate the adverse effects of time delays on convergence to the Nash equilibrium \cite{Park2023LearningWD}. Other works, such as \cite{9022871}, have established the stability of certain classes of decision-making models in PG by uncovering connections between PG and finite game formulations and leveraging stability analysis techniques developed in the latter context.

Despite these significant advances, a fundamental question remains unresolved: which strategy revision protocol is most effective for achieving equilibrium under a given objective? In this work, we address this question by leveraging the MFG framework to propose a novel approach for designing payoff dynamics models that optimally balance rapid convergence to the Nash equilibrium with the penalties associated with strategy revisions. To the best of our knowledge, this is the first study to formally establish a direct connection between MFG and ED models in PG, thereby enabling the computation of optimal strategy revision protocols.

Solutions for standard MFG problems are characterized by two equations: a backward Hamilton-Jacobi equation, which describes the evolution of the optimal cost-to-go, and a forward Fokker-Planck-Kolmogorov equation, which captures the distribution of agents' states. Traditionally, although both MFG and PG focus on decision-making in large populations, their primary difference lies in the structure of their strategy sets: MFG typically involves a continuous strategy set, as seen in differential games of the mean-field type \cite{CRC_Book_BaTe}, whereas PG is defined over a finite number of strategies \cite{Sandholm2010-SANPGA-2}. 

The study by \cite{Diogo_2013} introduces a finite-state MFG framework, laying the groundwork for establishing a connection between MFG and PG. Building on this foundation, the primary contribution of our work is to formally establish the relationship between MFG and PG frameworks. Specifically, we demonstrate the connection between the Fokker-Planck (FP) equation in finite-state MFG and the ED model in PG. Leveraging this connection, we show that the optimal solution in the finite-state MFG framework corresponds to an optimal strategy revision protocol in PG, with payoff dynamics derived from the backward Hamilton-Jacobi (HJ) equation. 

The contributions of this paper can be summarized as follows: $i)$ We establish a connection between the finite-state MFG introduced in \cite{Diogo_2013} and the ED models presented in \cite{Sandholm2010-SANPGA-2, SANDHOLM2015703}. This is achieved by introducing a dynamic model for payoff vectors analogous to the backward HJ equation in MFG. $ii)$ We demonstrate that, under the proposed payoff dynamics model, changes in the distribution of the agents' strategy selections exhibit positive correlation with the payoff vector. This behavior aligns with the positive correlation property observed in conventional ED models and is critical for ensuring convergence to the Nash equilibrium. Additionally, we establish the link between the steady-state solution of the MFG model and the Nash equilibrium in PG. The convergence of the population state to the Nash equilibrium is rigorously discussed. $iii)$ We show that existing ED models, including Smith, replicator, and projection dynamics, can be derived as special cases of our approach.

This paper is organized as follows: $\S$\ref{sec:brief_background} provides the problem statement and a brief overview of preliminaries on PG and finite-state MFG. $\S$\ref{sec:main} presents the main results of this paper. Illustrative examples are discussed in $\S$\ref{sec:examples}. Finally, conclusions are drawn in $\S$\ref{sec:conclusions}.

\section{Preliminaries and Problem Statement}
\label{sec:brief_background}

\subsection{Population Games and Problem Formulation}

Consider a large population of decision-making agents, each selecting a strategy from the set $\mathbb{S} = \{1,\cdots,n\}$. At any time $t \in [t_0,T]$ with $0 \leq t_0<T$, agents may revise their strategy choices. The vector-valued function $x: [t_0,T] \to \Delta$ represents \textit{the state of the population}, describing the distribution of agents' strategy selections at a given time $t$. The $i$-th entry $x_i(t)$ of $x(t)$ specifies the proportion of agents selecting strategy $i \in \mathbb{S}$ at time $t$. The set $\Delta$, known as the probability simplex, is defined as $\Delta = \{x \in \mathbb R_+^n \,|\, \sum_{i \in \mathbb S} x_i = 1\}$, representing the set of all feasible population states. The relative interior of the simplex is given by $\mathrm{int}(\Delta) = \{x \in \Delta \,|\, x_i > 0, \forall i \in \mathbb S \}$. 

In standard population games, the payoff function $F: \Delta \to \mathbb R^n$ specifies the payoff vector $F(x(t))$ associated with the population state $x(t)$. For a given $F$, the Nash equilibrium is defined as follows.
\begin{definition}
\label{def:NE}
    A population state $x^* \in \Delta$ is called a \textit{Nash equilibrium} of $F$ if the following holds for all $i \in \mathbb S$:
    \begin{align*}
         x_i^* >0 \implies F_i(x^*) = \textstyle\max_{j \in \mathbb S} F_j(x^*).
    \end{align*}
    The set of all Nash equilibria is denoted by $\mathbb{NE}(F)$. \hfill $\square$
\end{definition}
\vspace{0.06cm}

The definition of contractive population games is given as follows.
\begin{definition} \label{def:contractive_games}
    A population game $F$ is called \textit{contractive} if the following inequality:
    \begin{align}
        \left( F(x) - F(y) \right)^\top \left( x - y \right) \leq 0
    \end{align}
    holds for all $x,y \in \Delta$.  \hfill $\square$
\end{definition}

Agents are allowed to revise their strategy selections at the arrival times of i.i.d. Poisson processes \cite{Sandholm2010-SANPGA-2}. Let $\rho_{ji} (p, x)$ denote the \textit{strategy revision protocol}, which specifies the rate at which a strategy-revising agent switches from strategy $j$ to strategy $i$. For instance, the Smith protocol is defined as $\rho_{ji}( p, x) = [p_i - p_j]_+$, where $[\cdot]_+ = \max(0,\cdot)$.
As shown in \cite{Sandholm2010-SANPGA-2}, under the assumptions of a homogeneous population---where all agents adopt the same revision protocol---and an infinitely large population, the population state $x(t)$ evolves according to the following ordinary differential equation, commonly referred to as the Evolutionary Dynamics (ED):
\begin{multline} \label{eq:evolutionary_dynamics}
    \dot{x}_i(t) = \textstyle \sum_{j \in \mathbb S}  x_j(t) \rho_{ji} (p(t), x(t)) \\ - x_i(t)  \textstyle \sum_{j \in \mathbb S}  \rho_{ij} (p(t), x(t)), ~ i \in \mathbb S,
\end{multline}   
where $p(t) = F(x(t))$.
Let $V$ be the vector field defining $\dot x(t) = V (p(t), x(t))$, where $x(t) = (x_1(t), \cdots, x_n(t))$. As discussed in \cite{HOFBAUER20091665}, the following key properties of ED are crucial in establishing convergence to the Nash equilibrium.
\vspace{0.06cm}
\begin{definition} \label{def:two_properties}
 ED \eqref{eq:evolutionary_dynamics} satisfies $i)$ the \textit{Nash stationarity} if $V(F(x), x) = 0$ holds if and only if $x \in \mathbb{NE}(F)$, and 
    $ii)$ the \textit{positive correlation} if $V (p, x) \neq 0$ implies $p^\top V (p, x) > 0$.  \hfill $\square$
\end{definition}
\vspace{0.06cm}

The primary objective of this work is to design an optimal strategy revision protocol $\rho = (\rho_{ij})_{i,j \in \mathbb S}$ and a payoff dynamics model
\begin{align} \label{eq:payoff_dynamics_model}
    \dot p_i(t) = G_i (p(t), x(t)), ~ i \in \mathbb S,
\end{align}
that maximize the agents' long-term payoffs. 
The problem is formulated as an optimal control problem for a \textit{reference} agent initialized with strategy~$i$. The reference agent is defined as a randomly selected individual whose strategy at the initial time~$t_0$ is $i$.
As in the standard framework of population games, each agent revises its strategy based on its own payoff. The effect of an agent’s strategic decisions on its future payoffs is captured through the population state $x(t)$ in the formulation. Consequently, the finite time-horizon payoff function is defined exclusively in terms of the current and anticipated future payoffs of the agent.

\textbf{Main Problem:} Consider an optimal control problem where a reference agent initially selects strategy $i$ at time $t_0$ and subsequently revises its strategy according to the strategy revision protocol. The objective is to design $\rho = (\rho_{ij})_{i,j \in \mathbb S}$ and $G = (G_i)_{i \in \mathbb S}$ to \textit{maximize} a finite time-horizon payoff function defined as:\footnote{In population games, agents are non-atomic, meaning their individual identities are negligible; only their strategy choices matter. Consequently, the finite time-horizon payoff function \eqref{eq:cost_to_go} describes the long-term payoffs received by agents who begin with strategy~$i$ under the revision protocol $\rho$.}
\begin{multline} \label{eq:cost_to_go} 
    P_i(t_0, \rho) = \mathbb{E} \Big[ \textstyle\int_{t_0}^{T} \big( -\sum_{j \in \mathbb {S}} \frac{q_{s(t), j}(x(t))}{2} \rho_{s(t), j}^2 (p(t), x(t)) \\
    + F_{s(t)}(x(t)) \big) \mathrm{d}t + F_{s(T)}(x(T)) \,\Big|\, s(t_0) = i \Big],
\end{multline} 
where $F = (F_1, \cdots, F_n)$ is the instantaneous payoff function of the underlying population game, $s(t) \in \mathbb S$ denotes the agent's strategy at time $t$, and $x(t)$ and $p(t)$ are determined by \eqref{eq:evolutionary_dynamics} and \eqref{eq:payoff_dynamics_model}, respectively. 
We assume that $F$ is continuously differentiable on $\Delta$. The expectation in \eqref{eq:cost_to_go} is taken with respect to the strategy selection $s(t)$, a random variable over $\mathbb S$ with the transition probability satisfying \cite[\S 4]{park2018cdc}: $\lim_{\epsilon \to 0} \frac{1}{\epsilon} \mathbb{P} \left[ s(t + \varepsilon) = j \,\big|\, s(t) = i \right] = \rho_{ij}(p(t), x(t))$.

The finite time-horizon payoff function $P_i(t_0, \rho)$ accumulates the payoffs over the interval $[t_0, T]$, incorporating a penalty term that grows quadratically with the revision protocol $\rho_{ij}$. The weighting function $q_{ij}(x(t))$, defined for all $i,j \in \mathbb{S}$ and depending on the population state $x(t)$, plays a central role in shaping the optimal revision protocol. A smaller value of $q_{ij}(x(t))$ reduces the cost associated with revising the strategy from $i$ to $j$, thereby accelerating the ED model described in \eqref{eq:evolutionary_dynamics}. We assume that $q_{ij}(x) > 0$ for all $x \in \mathrm{int}(\Delta)$, that $q_{ij}(x)$ is continuously differentiable on $\mathrm{int}(\Delta)$, and that $q_{ii}(x) = 0$ for all $x \in \Delta$. Examples of such weighting functions are discussed in $\S$\ref{sec:link_to_existing_ed}.\footnote{The formulation can also be extended to cases where $q_{ij}(x(t))$ becomes arbitrarily large---for instance, when $q_{ij}(x(t)) = 1/a_{ij}$ with $a_{ij} = 0$---as will be discussed in Corollary~\ref{coro:2}. In this case, the analysis proceeds by first setting $\rho_{ij}(p(t), x(t)) = 0$, and then determining the revision protocols $\rho_{kl}(p(t), x(t))$ for which $q_{kl}(x(t))$ remains bounded.}

\subsection{Finite-State Mean Field Games}
\label{sec:mean-field}

Consider a large population of homogeneous agents interacting within a finite-state MFG framework over a time horizon $[t_0,T]$ with $0 \leq t_0<T$, as discussed in \cite{Diogo_2013}. In this framework, the state of a \textit{reference} agent is modeled by a continuous-time Markov chain defined over a finite state space $\mathbb{S}=\{1,\cdots,n\}$. State transitions are governed by an $n \times n$ transition matrix $\alpha (t) = (\alpha_{ij}(t))_{i,j \in \mathbb S} \in \mathbb R^{n \times n}$, satisfying the following conditions at each time~$t$: $\alpha_{ij}(t) \geq 0, ~ \forall i,j \in \mathbb S$, $i \neq j$, and $\sum_{j \in \mathbb{S}} \alpha_{ij}(t) = 0, ~ \forall i \in \mathbb{S}$. More formally, for $\varepsilon > 0$, the transition probability is defined as
\begin{align}
\label{eq:Markov}
    \mathbb{P} \left[ s(t + \varepsilon) = j \,\big|\, s(t) = i \right] = \alpha_{ij}(t) \varepsilon + o(\varepsilon),
\end{align}
where $s(t) \in \mathbb S$ represents the agent's state at time~$t$, and $\lim_{\varepsilon \to 0} o(\varepsilon)/\varepsilon = 0$. 

The state distribution across $\mathbb S$ is represented by a function $x: [t_0,T] \to \Delta$, referred to as the \textit{mean-field term}. The main objective in MFG problems is to compute the transition matrix $\alpha (t), \, t \in [t_0, T]$ that \textit{minimizes} a finite time-horizon cost function $J_i$ for a reference agent starting in state $i$ at time $t_0$, defined as:
\begin{multline} \label{eq:cost_mfg}
J_i(t_0, \alpha) = \mathbb{E} \Big[ \textstyle\int_{t_0}^{T} c(s(t),x(t), \alpha_{s(t)} (t)) \, \mathrm{d}t  \\
+ \psi(s(T),x(T)) \,\Big|\, s(t_0) = i \Big], 
\end{multline}
where $\alpha_i(t) = (\alpha_{i1}(t), \cdots, \alpha_{in}(t))$ is the $i$-th row of the transition matrix, $c: \mathbb S \times \Delta \times \mathbb R^n \to \mathbb R$ is the running cost, and $\psi: \mathbb S \times \Delta \to \mathbb R$ is the terminal cost. For notational convenience, given $i \in \mathbb S$, let $\mathbb R_{i,0}^n$ denote the set of $n$-dimensional vectors $\mu = (\mu_1, \cdots, \mu_n)$ satisfying: $i)$ $\mu_j \geq 0$ if $j \neq i$, and $ii)$ $\sum_{j \in \mathbb S} \mu_j = 0$. The optimal cost-to-go from time $t$ is defined as $w_i(t,x) = \min_{\alpha} J_i(t,\alpha)$.
\begin{assump} \label{assump:mfg}
It is assumed that $1)$ the running cost $c(i, x, \mu)$ is Lipschitz continuous in $x \in \Delta$, with the Lipschitz constant (with respect to $x$) independent of $\mu \in \mathbb R_{i,0}^n$, $2)$ $c(i, x, \mu)$ is differentiable with respect to $\mu \in \mathbb R_{i,0}^n$, $3)$ for any $i \in \mathbb S$, $x \in \Delta$, the partial derivative $\frac{\partial c}{\partial \mu}(i, x, \mu)$ is Lipschitz continuous with respect to $x \in \Delta$, uniformly in $\mu$, $4)$ $c(i, x, \mu)$ does not depend on $\mu_i$, and is uniformly convex: For any $i \in \mathbb S$, $x \in \Delta$, $\mu, \mu' \in \mathbb R_{i,0}^n$ with $\mu_j \neq \mu_j'$ for some $j \neq i$, it holds that
\begin{align*}
  c(i, x,\mu') \geq c(i,x,\mu) + \nabla_{\mu}^\top c(i,x,\mu) (\mu' \!-\! \mu) + \gamma \| \mu' \!-\! \mu \|^2  
\end{align*}
for some positive constant $\gamma$, $5)$ $c(i, x, \mu)$ is  superlinear on $\mu_j, \, j \neq i$: $\lim_{\mu_j \to \infty} \frac{c(i, x, \mu)}{\|\mu\|} = \infty$,
and $6)$ the terminal cost $\psi(i, x)$ is Lipschitz continuous in $x$. \hfill $\square$
\end{assump}

The solution to the MFG problem is given by two coupled equations, as outlined in \cite{Diogo_2013}. The first is the backward evolution of $w$, known as the \textit{Hamilton-Jacobi (HJ) equation}:
\begin{subequations}
\label{eq:mean_field system}
\begin{align}
&\dot{w}_i(t,x(t)) = -H_i(w(t, x(t)),x(t)) \label{eq:mean_field system_a}\\
&w_i(T,x(T)) =  \psi(i,x(T)),\label{eq:mean_field system_b}
\end{align}
\end{subequations}
where the Hamiltonian $H_i$ is defined as
\begin{align*}
    H_i(z,x) = \textstyle \min_{\mu \in \mathbb R_{i,0}^n} \left( c(i,x, \mu) + \sum_{j \in \mathbb{S}} \mu_j (z_j - z_i) \right).
\end{align*}
The second equation describes the forward evolution of $x(t)$, commonly referred to as the \textit{Fokker-Planck (FP) equation}:
\begin{align}
\begin{array}{l}
\dot{x}_i(t) = \textstyle\sum_{j \in \mathbb{S}} \alpha_{ji} (t) x_{j}(t), ~ x_i(t_0) = x_{0,i},      
\end{array} 
\label{eq:mean_field system_c}
\end{align}
where $x_0 = (x_{0,1}, \cdots, x_{0,n}) \in \Delta$, and 
\begin{multline*}
    \alpha_i (t) = \textstyle \argmin_{\mu \in \mathbb R_{i,0}^n} \big ( c(i, x(t), \mu) \\ + \textstyle \sum_{j \in \mathbb S } \mu_j (w_j(t, x(t)) - w_i (t, x(t)) ) \big),
\end{multline*}
representing the optimal state transition. 
The existence of solutions to \eqref{eq:mean_field system} and \eqref{eq:mean_field system_c} is established in \cite[Proposition 4]{Diogo_2013}, while their uniqueness is proven in \cite[Theorem 2]{Diogo_2013}.

The MFG system \eqref{eq:mean_field system} and \eqref{eq:mean_field system_c} may not admit a conventional stationary solution, i.e., one satisfying $\dot{w}(t, x(t)) = \dot{x}(t) = 0$, for instance, in cases when $H_i > 0$. Instead, following the approach in \cite{Diogo_2013}, we define a stationary solution \textit{modulo addition of a constant $\kappa$}, characterized by $\dot{w}(t, x(t)) = -\kappa \mathbf 1_n$ and $\dot{x}(t) = 0$, for which the following conditions hold for all $i \in \mathbb S$:
\begin{subequations}
\label{eq:mean_field_system_stationary}
\begin{align}
H_i(w(t, x(t)), x(t)) &= \kappa \label{eq:mean_field system_stationary_a}\\
\textstyle\sum_{j \in \mathbb{S}} \alpha_{ji}(t) x_{j}(t) &= 0. \label{eq:mean_field system_stationary_b}
\end{align}
\end{subequations}    
Note that the existence of such a stationary solution follows from \cite[Proposition~5]{Diogo_2013}.

\section{Optimal Strategy Revision Protocol Design}
\label{sec:main}

\begin{figure}[t!]
    \centering
    \subfigure[]{
    \includegraphics[trim={0.0in 0.0in 0.0in 0.0in}, clip ,width=1.6in]{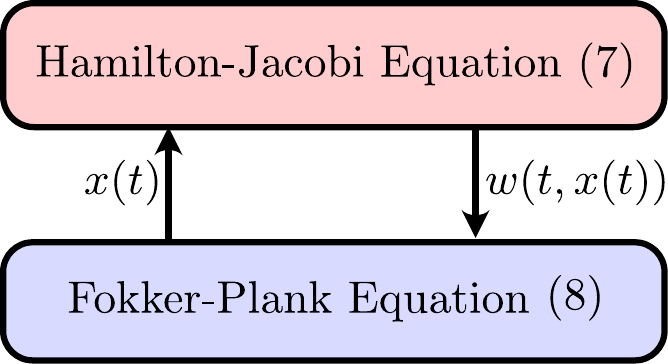}}
    \subfigure[]{
    \includegraphics[trim={0.0in 0.0in 0.0in 0.0in}, clip ,width=1.6in]{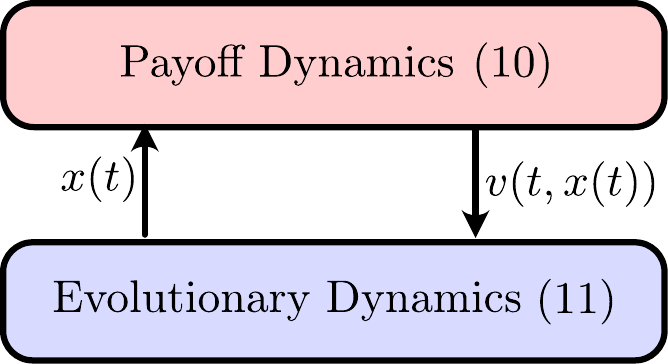}}
    \caption{Closed-loop model diagrams for (a) MFG and (b) the payoff dynamics for the optimal strategy revision in PG. To highlight the relationship, recall that $v_i(t, x(t)) = -w_i(t, x(t))$.}
    \label{fig:schemes}
\end{figure}

We begin by establishing the relationship between the MFG system and the optimal strategy revision process in PG, as illustrated in Fig.~\ref{fig:schemes}. The following theorem establishes the connection between the state distribution dynamics in finite-state MFGs and the ED model, providing a solution to our main problem. The proof is presented in Appendix~\ref{proof:propos:connection}.
We note that the existence and uniqueness of solutions to the main problem are ensured by the results established for \eqref{eq:cost_mfg}, as discussed in \S\ref{sec:mean-field}.

\begin{theorem}
\label{propos:connection}
The optimal strategy revision protocol for \eqref{eq:cost_to_go} is given by:
\begin{align*}
    \rho_{ij}(p(t), x(t)) = \frac{[p_j(t) - p_i(t)]_+}{q_{ij}(x(t))},
\end{align*}
where $p_i(t) = v_i(t, x(t))$, and $v_i(t, x(t))$ satisfies the following backward differential equation:
\begin{subequations}
\label{eq:HJ_equation}
\begin{align}
\dot{v}_i(t,x(t)) &= -\frac{1}{2} \sum_{j \in \mathbb{S}} \dfrac{[v_j(t,x(t)) \!-\! v_i(t,x(t))]_+^2}{q_{ij}(x(t))} \!-\! F_i(x(t)) \\
v_i(T,x(T)) &= F_i(x(T)).
\end{align}
\end{subequations}
The corresponding evolution of $x_i(t)$ is described by:
\begin{subequations}
\label{eq:pairwise}
\begin{align}
\dot{x}_i(t) &= \textstyle\sum_{j \in \mathbb{S}}  x_{j}(t)  \dfrac{[v_i(t,x(t)) - v_j(t,x(t))]_+}{q_{ji}(x(t))} \notag\\
&\qquad - x_{i}(t) \textstyle\sum_{j \in \mathbb{S}}  \dfrac{[v_j(t,x(t)) - v_i(t,x(t))]_+}{q_{ij}(x(t))}, \label{eq:pairwise_a}\\
x_i(t_0) &= x_{0,i}, \label{eq:pairwise_b}
\end{align}
\end{subequations}
where $x_0 \in \Delta$. \hfill $\square$
\end{theorem}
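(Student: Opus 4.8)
The plan is to recognize the Main Problem as a sign‑reversed instance of the finite‑state MFG of \S\ref{sec:mean-field} and then to transcribe the MFG optimality system \eqref{eq:mean_field system}--\eqref{eq:mean_field system_c} back into the language of PG. The first step is the reduction. Maximizing $P_i(t_0,\rho)$ in \eqref{eq:cost_to_go} is the same as minimizing $-P_i(t_0,\rho)$, which has exactly the form \eqref{eq:cost_mfg} once we set the running cost $c(i,x,\mu)=\sum_{j\in\mathbb S}\tfrac{q_{ij}(x)}{2}\mu_j^2-F_i(x)$, the terminal cost $\psi(i,x)=-F_i(x)$, and identify the off‑diagonal transition rates $\alpha_{ij}(t)=\rho_{ij}(p(t),x(t))$, $i\neq j$, with the diagonal entries fixed by $\sum_{j}\alpha_{ij}=0$, which is precisely the gain/loss bookkeeping of the ED \eqref{eq:evolutionary_dynamics}. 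Under this identification the jump‑rate conditions \eqref{eq:Markov} and the one stated below \eqref{eq:cost_to_go} coincide, so the reference agent's controlled chain in the Main Problem \emph{is} the MFG controlled chain, and its value equals $-w_i$. I would then check that this $(c,\psi)$ meets Assumption~\ref{assump:mfg}: $c$ is independent of $\mu_i$ since $q_{ii}\equiv 0$; being quadratic with Hessian $\mathrm{diag}(q_{ij}(x))_{j\neq i}\succ 0$, it is differentiable, uniformly convex and superlinear in $(\mu_j)_{j\neq i}$; and the Lipschitz/continuity‑in‑$x$ requirements follow from $F\in C^1(\Delta)$ and $q_{ij}\in C^1(\mathrm{int}(\Delta))$. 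With Assumption~\ref{assump:mfg} verified, \cite[Prop.~4]{Diogo_2013} and \cite[Thm.~2]{Diogo_2013} yield existence and uniqueness, hence of the optimizer of the Main Problem.

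The second step is the Hamiltonian computation and transcription. Because $c$ omits $\mu_i$ and the term $j=i$ in $\sum_j\mu_j(z_j-z_i)$ vanishes, the minimization defining $H_i(z,x)$ separates over $\mu_j\geq 0$, $j\neq i$, into scalar problems $\min_{\mu_j\geq 0}\bigl(\tfrac{q_{ij}(x)}{2}\mu_j^2+\mu_j(z_j-z_i)\bigr)$, each solved by $\mu_j^\star=[z_i-z_j]_+/q_{ij}(x)$ with value $-[z_i-z_j]_+^2/(2q_{ij}(x))$. Hence $H_i(z,x)=-F_i(x)-\tfrac12\sum_{j\in\mathbb S}[z_i-z_j]_+^2/q_{ij}(x)$ and the $\argmin$ in \eqref{eq:mean_field system_c} is $\alpha_{ij}(t)=[w_i(t,x(t))-w_j(t,x(t))]_+/q_{ij}(x(t))$. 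Substituting $v_i:=-w_i$ (so $p_i(t)=v_i(t,x(t))$ and $w_i-w_j=p_j-p_i$) converts the HJ system \eqref{eq:mean_field system} into \eqref{eq:HJ_equation} with terminal value $v_i(T,x(T))=F_i(x(T))$, converts the optimal transition into $\rho_{ij}(p,x)=[p_j-p_i]_+/q_{ij}(x)$, and---after expanding $\dot x_i=\sum_{j\neq i}\alpha_{ji}x_j+\alpha_{ii}x_i$ with $\alpha_{ii}=-\sum_{j\neq i}\alpha_{ij}$---converts the FP equation into the pairwise ED \eqref{eq:pairwise}, i.e.\ \eqref{eq:evolutionary_dynamics} driven by the stated $\rho$. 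Reading \eqref{eq:HJ_equation} forward along $x(t)$ exhibits the sought payoff dynamics $G_i(p,x)=-\tfrac12\sum_{j}[p_j-p_i]_+^2/q_{ij}(x)-F_i(x)$.

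I expect the crux to be not any single estimate but the careful bookkeeping that makes the reduction legitimate: (i) confirming that the Main Problem's formulation---agent strategy $s(t)$ with revision rates $\rho_{s(t),j}$ and population feedback through $x(t)$---is literally the controlled Markov chain underlying \eqref{eq:cost_mfg}, so that its minimizer is admissible and expressible as a feedback on $(p,x)$; and (ii) verifying Assumption~\ref{assump:mfg} with the convexity and Lipschitz constants uniform in $\mu$, which is exactly where the hypotheses $q_{ij}>0$ and $q_{ij}\in C^1(\mathrm{int}(\Delta))$ are used (and, in the boundary‑degenerate case $q_{ij}\to\infty$, the device described in the footnote preceding the theorem and in Corollary~\ref{coro:2}). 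Once that is in place, the scalar minimization and the substitution $v=-w$ are routine.
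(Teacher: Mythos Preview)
Your proposal is correct and follows essentially the same route as the paper's own proof: identify the Main Problem with the finite-state MFG of \S\ref{sec:mean-field} via $c(i,x,\mu)=\sum_j\tfrac{q_{ij}(x)}{2}\mu_j^2-F_i(x)$ and $\psi(i,x)=-F_i(x)$, compute the separable Hamiltonian minimizer $\alpha_{ij}=[w_i-w_j]_+/q_{ij}(x)$, and then substitute $v_i=-w_i$ to transcribe \eqref{eq:mean_field system}--\eqref{eq:mean_field system_c} into \eqref{eq:HJ_equation}--\eqref{eq:pairwise}. If anything, your treatment of the scalar decomposition of the Hamiltonian and the verification of Assumption~\ref{assump:mfg} is more explicit than the paper's, which relegates the latter to a footnote.
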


As stated in the theorem, the payoff dynamics model \eqref{eq:payoff_dynamics_model} is given by:
\begin{align*}
    G_i(p(t), x(t)) = -\frac{1}{2} \textstyle \sum_{j \in \mathbb{S}} \dfrac{[p_j(t) - p_i(t)]_+^2}{q_{ij}(x(t))} - F_i(x(t))
\end{align*}
with the terminal condition $p_i(T) = F_i(x(T))$ for all $i \in \mathbb S$. 

\vspace{0.06cm}

\subsection{Direct Connection with Conventional ED Models} \label{sec:link_to_existing_ed}

The following corollary demonstrates how specific classes of existing ED models can be derived from Theorem~\ref{propos:connection} by identifying a corresponding weight $q_{ij}(x(t))$ in \eqref{eq:cost_to_go} for each class. Each resulting ED model is named after its counterpart in the literature, as discussed in Remark~\ref{rmk:1}, where we provide an interpretation of the associated revision protocol. This interpretation offers valuable insights into the behaviors observed in PG. For brevity, we restrict attention to the case where $x(t) \in \mathrm{int}(\Delta)$ for all $t \in [t_0, T]$.

\vspace{0.06cm}

\begin{corollary}
\label{lemma:link2}
    Consider the following choices for the weight $q_{ij}(x(t))$ in \eqref{eq:cost_to_go}: $i)$ $q_{ij}(x(t)) = 1$, $ii)$ $q_{ij}(x(t)) = 1/x_j(t)$, and $iii)$ $q_{ij}(x(t)) = x_i(t)$. For each case, the corresponding ED models are as follows:
    
    $i)$ \textit{Smith dynamics and payoff dynamics ($q_{ij}(x(t)) = 1$):}
    \begin{align*}
        \dot{x}_i(t) &= \textstyle\sum_{j \in \mathbb{S}} x_{j}(t)[v_i(t,x(t)) - v_j(t,x(t))]_+ \\
        &\qquad - x_{i}(t)  \textstyle\sum_{j \in \mathbb{S}} [v_j(t,x(t)) - v_i(t,x(t))]_+ \\
        \dot{v}_i(t,\!x(t)) &\!=\! -\frac{1}{2} \textstyle\sum_{j \in \mathbb{S}} [v_j(t,\!x(t)) \!-\! v_i(t,\!x(t))]_+^2 \!-\! F_i(x(t)).
    \end{align*}
    
    $ii)$ \textit{Replicator dynamics and payoff dynamics ($q_{ij}(x(t)) = 1/x_j(t)$):}
    \begin{align*}
        \dot{x}_i(t) &= x_i(t) \big( v_i(t, x(t)) - \textstyle\sum_{j \in \mathbb{S}} x_{j}(t) v_j(t, x(t)) \big) \\
        \dot{v}_i(t, x(t)) &= -\frac{1}{2} \textstyle\sum_{j \in \mathbb{S}} x_j(t)[v_j(t,x(t)) - v_i(t,x(t))]_+^2  \\
        &\qquad - F_i(x(t)).
    \end{align*}
    
    $iii)$ \textit{Projection dynamics and payoff dynamics ($q_{ij}(x(t)) = x_i(t)$):}
    \begin{align*}
        \dot{x}_i(t) &= \textstyle\sum_{j \in \mathbb{S}}  \big( v_i(t,x(t)) - v_j(t,x(t)) \big) \\
        \dot{v}_i(t,\!x(t)) &\!=\! -\frac{1}{2} \textstyle\sum_{j \in \mathbb{S}} \dfrac{[v_j(t,\!x(t)) \!-\! v_i(t, x(t))]_+^2}{x_i(t)} - F_i(x(t)).
    \end{align*}
    All models satisfy the terminal condition $v_i(T,x(T)) =  F_i(x(T))$ and the initial condition $x(t_0) = x_0 \in \mathrm{int}(\Delta)$. \hfill $\square$
\end{corollary}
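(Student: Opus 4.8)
The plan is to derive each of the three models by substituting the prescribed weight $q_{ij}(x(t))$ into the closed-form expressions of Theorem~\ref{propos:connection} and then simplifying algebraically. As a preliminary step I would check that, on $\mathrm{int}(\Delta)$, each candidate weight---$1$, $1/x_j(t)$, and $x_i(t)$---is strictly positive and continuously differentiable, so that the standing hypotheses under which Theorem~\ref{propos:connection} is established are met; restricting the trajectory to $\mathrm{int}(\Delta)$ over the compact horizon $[t_0,T]$ moreover keeps each $x_j(t)$ bounded away from $0$, which is precisely what legitimizes the division by $x_j(t)$ in cases $ii)$ and $iii)$ and preserves the positivity of the weights along the trajectory. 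It also suffices to prescribe $q_{ij}$ only for $i \neq j$, since the diagonal terms enter \eqref{eq:pairwise_a} and \eqref{eq:HJ_equation} multiplied by the factor $[v_i - v_i]_+ = 0$ and hence never contribute.

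For case $i)$ the claim is immediate: setting $q_{ij} \equiv 1$ in \eqref{eq:pairwise_a} and \eqref{eq:HJ_equation} reproduces the displayed Smith dynamics and its payoff dynamics with no further manipulation. For cases $ii)$ and $iii)$ the key tool is the elementary identity $[a]_+ - [-a]_+ = a$, which collapses the incoming-minus-outgoing flow in \eqref{eq:pairwise_a} into a single expression that is affine in $v$. Concretely, in case $iii)$ one has $q_{ji}(x) = x_j$ and $q_{ij}(x) = x_i$, so the incoming term $x_j [v_i - v_j]_+ / x_j$ and the outgoing term $x_i [v_j - v_i]_+ / x_i$ shed their population-state prefactors, and summing their difference over $j$ gives $\dot{x}_i = \sum_{j \in \mathbb{S}}(v_i - v_j)$; the payoff equation follows by inserting $q_{ij} = x_i$ directly into \eqref{eq:HJ_equation}. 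In case $ii)$, $q_{ji}(x) = 1/x_i$ and $q_{ij}(x) = 1/x_j$, so the two flow terms become $x_i x_j [v_i - v_j]_+$ and $x_i x_j [v_j - v_i]_+$; the same identity together with $\sum_{j \in \mathbb{S}} x_j(t) = 1$ then yields $\dot{x}_i = x_i\big(v_i - \sum_{j \in \mathbb{S}} x_j v_j\big)$, while substituting $q_{ij} = 1/x_j$ into \eqref{eq:HJ_equation} produces the stated replicator payoff dynamics. The terminal condition $v_i(T,x(T)) = F_i(x(T))$ and the initial condition $x(t_0) = x_0$ are inherited verbatim from Theorem~\ref{propos:connection}.

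Because the argument is purely a specialization of an already-established result, I do not expect a genuine obstacle here; the only points requiring care are bookkeeping ones---keeping straight which index appears in $q_{ij}$ versus $q_{ji}$, since the incoming and outgoing flows in \eqref{eq:pairwise_a} carry transposed subscripts, and explicitly invoking the interior assumption both to make sense of division by $x_j(t)$ and to guarantee positivity of the weights along the trajectory. The interpretation of each resulting revision protocol $\rho_{ij} = [v_j - v_i]_+/q_{ij}$ in terms of its classical counterpart (pairwise comparison for Smith, proportional/imitative for replicator, simplex projection for projection dynamics) would then be recorded separately in Remark~\ref{rmk:1}, as indicated in the text preceding the corollary.
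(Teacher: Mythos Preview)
Your proposal is correct and matches the paper's approach: the corollary is stated without proof in the paper, being treated as an immediate specialization of Theorem~\ref{propos:connection} obtained by substituting each choice of $q_{ij}(x(t))$ into \eqref{eq:HJ_equation}--\eqref{eq:pairwise}. Your use of the identity $[a]_+ - [-a]_+ = a$ and of $\sum_{j\in\mathbb S} x_j(t)=1$ to collapse the flow terms in cases $ii)$ and $iii)$ is exactly the intended computation, and your attention to the $q_{ij}$ versus $q_{ji}$ indexing and to the interior-of-simplex restriction is precisely what the paper invokes (implicitly) to keep the weights well-defined and positive.
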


\vspace{0.06cm}

Corollary \ref{lemma:link2} highlights interesting connections between the choice of $q_{ij}(x(t))$ and several well-known properties in ED, as elaborated in the following remarks.

\vspace{0.06cm}

\begin{remark}
\label{rmk:1}
Note the following two facts:

$i)$ For the replicator dynamics, when $\rho_{ij}(p(t), x(t)) > 0$, \eqref{eq:cost_to_go} may diverge as the proportion of decision-makers approaches zero, i.e., $q_{ij}(x(t)) \to \infty$ when $x_j(t) \to 0$. Consequently, choosing $q_{ij}(x(t))={1}/{x_j(t)}$ serves as a penalty function that ensures the boundary $\partial \Delta$ of the simplex remains invariant. This behavior is consistent with the well-known property that, under replicator dynamics, the boundary $\partial \Delta$ forms an invariant set \cite{Hofbauer_1987}.

$ii)$ As $T \to t_0$ in \eqref{eq:HJ_equation}, we recover $v_i(t_0, x(t_0)) = F_i(x(t_0))$. Substituting this result into \eqref{eq:pairwise} reproduces the standard ED models, as in \eqref{eq:evolutionary_dynamics}, widely studied in the literature \cite{Sandholm2010-SANPGA-2,BaObQu2017,EDM_constraints}. In contrast, when $T > t_0$, the presence of \textit{payoff dynamics}---rather than static payoff functions---allows agents to optimize based on their cost-to-go, thereby exhibiting forward-looking strategic decision-making. \hfill $\square$
\end{remark}

In the following corollary, we consider migration constraints that restrict agents from switching between specific pairs of strategies. These permissible strategy switches are represented by an undirected connected graph $(\mathbb{S},\mathbb{L},A)$, where $\mathbb{L} \subseteq \{ (i,j) \,|\, i,j \in \mathbb{S} \}$ denotes the set of allowed migrations (graph links), and a symmetric matrix $A \in \{0, 1\}^{n \times n}$ is the adjacency matrix. The entries $a_{ij}$ of $A$ capture the migration constraints, such that $(i,j) \in \mathbb{L}$ if and only if $a_{ij} = 1$. The set of neighbors of strategy $i$ is defined as $\mathbb{S}_i = \{j \in \mathbb S \,|\, (j,i) \in \mathbb{L} \}$. 
Prior works \cite{BaObQu2017, EDM_constraints, Tembine_2008} have demonstrated that migration constraints lead to distributed population dynamics by incorporating the adjacency matrix $A$ -- or equivalently, the migration constraints -- into the revision protocol.
\begin{corollary}
\label{coro:2}
Consider the following choices for the weights $q_{ij}(x(t))$ in \eqref{eq:cost_to_go} using $A = (a_{ij})_{i,j \in \mathbb S}$: $i)$ $q_{ij}(x(t))={1}/{a_{ij}}$, $ii)$ $q_{ij}(x(t)) = {1}/{(a_{ij} x_j(t))}$, and $iii)$ $q_{ij}(x(t)) = {x_i(t)}/{a_{ij}}$. For each case, the corresponding ED models are as follows:

$i)$ \textit{Distributed Smith dynamics and payoff dynamics ($q_{ij}(x(t)) = 1/a_{ij}$):}
    \begin{align*}
        \dot{x}_i(t) &= \textstyle \sum_{j \in \mathbb{S}_i}   x_{j}(t)[v_i(t,x(t)) - v_j(t,x(t))]_+ \\
        &\qquad - x_{i}(t)  \textstyle \sum_{j \in \mathbb{S}_i} [v_j(t,x(t)) \!-\! v_i(t,x(t))]_+ \\
        \dot{v}_i(t,x(t)) &= -\frac{1}{2} \textstyle \sum_{j \in \mathbb{S}_i} [v_j(t,x(t)) - v_i(t,x(t))]_+^2 \!-\! F_i(x(t)).
    \end{align*}

$ii)$ \textit{Distributed replicator dynamics and payoff dynamics ($q_{ij}(x(t)) = 1/(a_{ij}x_j(t))$):}
    \begin{align*}
        \dot{x}_i(t) &= x_i(t) \big( v_i(t,x(t)) \textstyle \sum_{j \in \mathbb{S}_i}  x_{j}(t) \\
        &\qquad\qquad -  \textstyle \sum_{j \in \mathbb{S}_i}  x_{j}(t) v_j(t,x(t)) \big)\\
        \dot{v}_i(t,x(t)) &= -\frac{1}{2} \textstyle \sum_{j \in \mathbb{S}_i} x_j(t)[v_j(t,x(t)) - v_i(t,x(t))]_+^2  \\
        &\qquad - F_i(x(t)).
    \end{align*}

$iii)$ \textit{Distributed projection dynamics and payoff dynamics ($q_{ij}(x(t)) = x_i(t)/a_{ij}$):}
\begin{align*}
    \dot{x}_i(t) &= \textstyle \sum_{j \in \mathbb{S}_i}  (v_i(t,x(t)) - v_j(t,x(t))) \\
    \dot{v}_i(t, x(t)) &\!=\! -\frac{1}{2}\textstyle \sum_{j \in \mathbb{S}_i} \dfrac{[v_j(t,x(t)) \!-\! v_i(t,x(t))]_+^2}{x_i(t)}  \!-\! F_i(x(t)).
\end{align*}
All models satisfy the terminal condition $v_i(T,x(T)) =  F_i(x(T))$ and the initial condition $x(t_0) = x_0 \in \mathrm{int}(\Delta)$.  \hfill $\square$
\end{corollary}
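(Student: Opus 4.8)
The plan is to obtain Corollary~\ref{coro:2} directly from Theorem~\ref{propos:connection} by substituting the three weight choices into \eqref{eq:pairwise} and \eqref{eq:HJ_equation} and invoking the prescription in the footnote for handling unbounded weights. First I would split each weight according to whether $(i,j)\in\mathbb{L}$. When $a_{ij}=1$, the weights $1/a_{ij}$, $1/(a_{ij}x_j(t))$, $x_i(t)/a_{ij}$ coincide with $1$, $1/x_j(t)$, $x_i(t)$, i.e. the Smith, replicator, and projection weights of Corollary~\ref{lemma:link2}; hence Theorem~\ref{propos:connection} yields, for every neighbour $j\in\mathbb{S}_i$, exactly the per-pair revision rate $\rho_{ij}(p(t),x(t))=[p_j(t)-p_i(t)]_+/q_{ij}(x(t))$ and the per-pair term in the payoff dynamics from that corollary. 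When $a_{ij}=0$ we have $1/q_{ij}(x(t))=0$, so, following the footnote, we set $\rho_{ij}(p(t),x(t))=0$; equivalently this is the limit obtained by replacing $a_{ij}=0$ with $a_{ij}=\epsilon>0$ and letting $\epsilon\downarrow 0$, since both $\rho_{ij}^{\epsilon}$ and the associated Hamiltonian contribution $\tfrac{1}{2}\epsilon[v_j-v_i]_+^2/(\cdot)$ vanish, uniformly on compact subsets of $\mathrm{int}(\Delta)$. A cleaner alternative is to re-run the argument behind Theorem~\ref{propos:connection} with the a priori constraint $\rho_{ij}\equiv 0$ for $(i,j)\notin\mathbb{L}$: this restricts the feasible transition rates $\mu$ in the Hamiltonian minimization of \S\ref{sec:mean-field} to vectors supported on $\{i\}\cup\mathbb{S}_i$, and the same pointwise-minimization computation then produces $\mu_j=[v_j-v_i]_+/q_{ij}$ for $j\in\mathbb{S}_i$ and $\mu_j=0$ otherwise.

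Next I would carry out the substitution. With $1/q_{ij}(x(t))=a_{ij}$ (case~$i$), $a_{ij}x_j(t)$ (case~$ii$), $a_{ij}/x_i(t)$ (case~$iii$), every sum $\sum_{j\in\mathbb{S}}$ in \eqref{eq:pairwise} and \eqref{eq:HJ_equation} collapses to $\sum_{j\in\mathbb{S}_i}$ because $a_{ij}=0$ outside $\mathbb{S}_i$; this already gives the distributed Smith dynamics of case~$i$ verbatim. For cases~$ii$ and~$iii$ I would additionally use the symmetry $a_{ij}=a_{ji}$ and the elementary identity $[u]_+-[-u]_+=u$ to merge the inflow and outflow sums in $\dot{x}_i(t)$: case~$ii$ gives $\dot{x}_i=x_i\sum_{j\in\mathbb{S}_i}x_j\big([v_i-v_j]_+-[v_j-v_i]_+\big)=x_i\big(v_i\sum_{j\in\mathbb{S}_i}x_j-\sum_{j\in\mathbb{S}_i}x_jv_j\big)$, and case~$iii$ gives $\dot{x}_i=\sum_{j\in\mathbb{S}_i}\big([v_i-v_j]_+-[v_j-v_i]_+\big)=\sum_{j\in\mathbb{S}_i}(v_i-v_j)$, which are precisely the stated forms. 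The payoff-dynamics equations follow from the same substitution into \eqref{eq:HJ_equation}, and the terminal condition $v_i(T,x(T))=F_i(x(T))$ and initial condition $x(t_0)=x_0\in\mathrm{int}(\Delta)$ transfer unchanged from Theorem~\ref{propos:connection}.

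The step I expect to be the main obstacle is making the $a_{ij}\to 0$ limit, or equivalently the constrained-optimization reformulation, rigorous: with $q_{ij}=1/a_{ij}=\infty$ on the forbidden pairs, the running cost in \eqref{eq:cost_to_go} no longer meets the superlinearity and uniform-convexity parts of Assumption~\ref{assump:mfg} in the $\mu_j$ directions for $j\notin\mathbb{S}_i$, so the existence and uniqueness results quoted for \eqref{eq:cost_mfg} do not apply off the shelf. One has to argue that restricting admissible controls to those supported on the neighbour sets is without loss of optimality (a positive rate on a forbidden pair only adds cost), re-verify Assumption~\ref{assump:mfg} for the restricted problem on $\mathbb{R}_{i,0}^n\cap\{\mu_j=0,\ j\notin\mathbb{S}_i\}$, and note that connectedness of $(\mathbb{S},\mathbb{L},A)$ is what keeps the reduced state dynamics irreducible. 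Once that reduction is justified, the remaining work is the routine algebra indicated above.
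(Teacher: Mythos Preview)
Your proposal is correct and follows the same route the paper implicitly takes: the paper states Corollary~\ref{coro:2} without a separate proof, treating it as an immediate consequence of Theorem~\ref{propos:connection} obtained by substituting the three weights into \eqref{eq:HJ_equation}--\eqref{eq:pairwise} and handling $a_{ij}=0$ via the footnote's prescription $\rho_{ij}\equiv 0$. Your algebra for cases~$ii)$ and~$iii)$ (using $a_{ij}=a_{ji}$ and $[u]_+-[-u]_+=u$) is exactly the computation that produces the displayed forms, and your discussion of the rigor issues behind the $a_{ij}\to 0$ limit actually goes \emph{further} than the paper, which simply invokes the footnote without re-checking Assumption~\ref{assump:mfg} on the restricted control set.
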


\subsection{Positive Correlation, Nash Stationarity, and Convergence}
In the next two propositions, we show that the optimal strategy revision protocol satisfies positive correlation and Nash stationarity, similar to standard ED models (see Definition~\ref{def:two_properties})

\vspace{0.06cm}

\begin{proposition}
\label{lemma:PC}
The payoff dynamics model \eqref{eq:HJ_equation} and the ED model \eqref{eq:pairwise} satisfy the following positive correlation: $V (p(t),x(t)) \neq 0$ implies $p^\top(t) V(p(t),x(t)) > 0$, where $p(t) = v(t,x(t))$, for all $t \in [t_0, T]$.  \hfill $\square$
\end{proposition}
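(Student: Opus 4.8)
The plan is to directly compute the inner product $p^\top(t) V(p(t), x(t))$ using the pairwise structure of the ED model in~\eqref{eq:pairwise_a}, with $p_i(t) = v_i(t, x(t))$. First I would write
\begin{align*}
p^\top(t) V(p(t),x(t)) = \sum_{i \in \mathbb S} v_i \Big( \sum_{j \in \mathbb S} x_j \tfrac{[v_i - v_j]_+}{q_{ji}} - x_i \sum_{j \in \mathbb S} \tfrac{[v_j - v_i]_+}{q_{ij}} \Big),
\end{align*}
where I suppress the time and $x(t)$ arguments for brevity. The key step is to reorganize this double sum by pairing the $(i,j)$ and $(j,i)$ terms. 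For each unordered pair $\{i,j\}$, exactly one of $[v_i - v_j]_+$ and $[v_j - v_i]_+$ is nonzero (or both are zero). Suppose without loss of generality that $v_i \ge v_j$; then the only surviving contributions from the pair involve the flow of mass from $j$ to $i$ at rate $x_j [v_i - v_j]_+ / q_{ji}$, which enters $\dot x_i$ with a $+$ sign and $\dot x_j$ with a $-$ sign. Collecting the coefficient of this flow term in $p^\top V$ gives $v_i \cdot (+1) + v_j \cdot (-1)$ times $x_j [v_i - v_j]/q_{ji}$, i.e. $x_j (v_i - v_j)^2 / q_{ji} \ge 0$. Summing over all unordered pairs yields
\begin{align*}
p^\top(t) V(p(t),x(t)) = \sum_{i \in \mathbb S} \sum_{j \in \mathbb S} x_j(t) \frac{[v_i(t,x(t)) - v_j(t,x(t))]_+^2}{q_{ji}(x(t))} \ge 0,
\end{align*}
since $q_{ji}(x(t)) > 0$ on $\mathrm{int}(\Delta)$ and each summand is nonnegative.

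It then remains to argue the strict inequality: if $V(p(t),x(t)) \ne 0$, then $p^\top(t) V(p(t),x(t)) > 0$. I would show the contrapositive on the value of the quadratic form: if the sum above equals zero, then every term $x_j [v_i - v_j]_+^2 / q_{ji}$ vanishes, so for every ordered pair $(i,j)$ with $x_j(t) > 0$ we have $[v_i(t,x(t)) - v_j(t,x(t))]_+ = 0$, i.e. $v_i(t,x(t)) \le v_j(t,x(t))$. Feeding this back into~\eqref{eq:pairwise_a}: the outflow term $x_i \sum_j [v_j - v_i]_+/q_{ij}$ involves only $j$ with $v_j > v_i$, but for any such $j$ with $x_j > 0$ the condition just derived (applied to the pair $(j,i)$) forces $v_j \le v_i$, a contradiction unless $x_j = 0$; hence each outflow term that could be active is multiplied by $x_j = 0$ and vanishes. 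Likewise every inflow term $x_j [v_i - v_j]_+/q_{ji}$ vanishes because it is precisely one of the terms shown to be zero. Therefore $\dot x_i(t) = 0$ for all $i$, i.e. $V(p(t),x(t)) = 0$. This establishes that the quadratic form is zero exactly when $V = 0$, giving the claimed positive correlation.

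The main obstacle I anticipate is the bookkeeping needed to handle the boundary of the simplex cleanly — specifically, ensuring that when some $x_j(t) = 0$ the corresponding terms genuinely drop out of both $V$ and the quadratic form in a consistent way, and that $q_{ji}(x(t))$ appearing in denominators is handled only where it is positive (on $\mathrm{int}(\Delta)$). If the statement is intended for $x(t) \in \mathrm{int}(\Delta)$ (as in the standing restriction used for Corollaries~\ref{lemma:link2} and~\ref{coro:2}), this subtlety disappears entirely: all $x_j(t) > 0$, the reindexing of the double sum is immediate, and $V(p(t),x(t)) = 0$ forces $[v_i - v_j]_+ = 0$ for all $i,j$, which is exactly the vanishing of every summand. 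I would present the computation in the interior case first and then remark that the boundary case follows by the same pairing argument once the zero-mass strategies are excluded from the relevant sums.
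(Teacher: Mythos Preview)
Your approach is correct and essentially identical to the paper's: both compute $p^\top V$ directly, reindex the pairwise-comparison double sum to obtain $\sum_{i,j} x_j [v_i - v_j]_+^2 / q_{ji} \ge 0$, and observe that equality forces $V = 0$. Your write-up is in fact more detailed than the paper's (which asserts the ``equality only if $V=0$'' step without argument); the one slip is in your boundary discussion, where the outflow term $x_i [v_j - v_i]_+/q_{ij}$ is multiplied by $x_i$, not $x_j$---the correct case split is $x_i = 0$ versus $x_i > 0$ (the latter giving $[v_j - v_i]_+ = 0$ by your derived condition applied to the ordered pair $(j,i)$)---but as you note, in the interior case this subtlety does not arise.
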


\vspace{0.06cm}

\begin{proof}
Using \eqref{eq:pairwise}, we can evaluate $p^\top(t) V(p(t),x(t))$ as follows:
\begin{align*}
&p^\top(t) V(p(t),x(t)) \\
&= \textstyle\sum_{i \in \mathbb{S}} v_i(t,x(t)) \dot{x}_i(t) \\ 
&= \textstyle\sum_{i \in \mathbb{S}}\sum_{j \in \mathbb{S}}  x_{j}(t)  \dfrac{[v_i(t,x(t)) - v_j(t,x(t))]_+^2}{q_{ji}(x(t))}.
\end{align*}
Therefore, we conclude that $p^\top(t) V(p(t),x(t)) \geq 0$, with equality holds only if $V(p(t), x(t)) = 0$. 
This concludes the proof. \hfill $\blacksquare$
\end{proof}

Proposition \ref{lemma:PC} implies that, under the optimal strategy revision protocol $\rho$, the population state evolves to maximize the finite time-horizon payoff function \eqref{eq:cost_to_go}.

\begin{proposition}
\label{lemma:steadystate}
    Consider the modulo-addition stationary solution of the associated MFG system, as explained in \eqref{eq:mean_field_system_stationary}. The stationary solution for \eqref{eq:HJ_equation} and \eqref{eq:pairwise} is given by: $x(t) = x^\ast$ and 
        $v(t, x^\ast) = \kappa  (t-T) \mathbf{1}_n + F(x^\ast), ~ \forall t \in [t_0, T]$, where $x^\ast$ is a Nash equilibrium of $F$.  \hfill $\square$
\end{proposition}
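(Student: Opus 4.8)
The plan is to transport the modulo-$\kappa$ stationary solution of the finite-state MFG system into the PG payoff/ED system through the correspondence established in the proof of Theorem~\ref{propos:connection}. Under that correspondence one has $v_i = -w_i$, the running cost $c(i,x,\mu) = \tfrac12\sum_{j\in\mathbb S}q_{ij}(x)\mu_j^2 - F_i(x)$, the terminal cost $\psi(i,x) = -F_i(x)$, the Hamiltonian $H_i(z,x) = -\tfrac12\sum_{j\in\mathbb S}[z_i-z_j]_+^2/q_{ij}(x) - F_i(x)$, and the optimal transition rate $\alpha_{ij}(t) = [w_i(t,x)-w_j(t,x)]_+/q_{ij}(x) = [v_j(t,x)-v_i(t,x)]_+/q_{ij}(x)$ for $i\neq j$, so that \eqref{eq:HJ_equation}--\eqref{eq:pairwise} is precisely the image under $v=-w$ of the MFG system \eqref{eq:mean_field system}, \eqref{eq:mean_field system_c}. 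Let $(w(\cdot,x^\ast),x^\ast)$ be the modulo-$\kappa$ stationary MFG solution, whose existence is guaranteed by \cite[Proposition~5]{Diogo_2013}; it satisfies $\dot w_i(t,x^\ast)=-\kappa$, the terminal condition $w_i(T,x^\ast)=\psi(i,x^\ast)=-F_i(x^\ast)$, and the stationarity relations \eqref{eq:mean_field system_stationary_a}--\eqref{eq:mean_field system_stationary_b} together with $\dot x^\ast=0$. Integrating $\dot w_i(t,x^\ast)=-\kappa$ backward from $t=T$ gives $w_i(t,x^\ast)=-F_i(x^\ast)+\kappa(T-t)$, hence $v_i(t,x^\ast)=-w_i(t,x^\ast)=\kappa(t-T)+F_i(x^\ast)$, which is the asserted form and in particular meets the terminal condition $v_i(T,x^\ast)=F_i(x^\ast)$.

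That the pair $(v(\cdot,x^\ast),x^\ast)$ solves \eqref{eq:HJ_equation} and \eqref{eq:pairwise} is then immediate, since a stationary solution of the MFG system maps under $v=-w$ to a stationary solution of \eqref{eq:HJ_equation}--\eqref{eq:pairwise}; for concreteness one checks it directly. Because $v_i(t,x^\ast)-v_j(t,x^\ast)=F_i(x^\ast)-F_j(x^\ast)$, the right-hand side of \eqref{eq:HJ_equation} equals $H_i(w(t,x^\ast),x^\ast)=\kappa=\dot v_i(t,x^\ast)$ by \eqref{eq:mean_field system_stationary_a}. For \eqref{eq:pairwise}, the right-hand side of \eqref{eq:pairwise_a} depends on $v$ only through the differences $v_i-v_j$; substituting $\alpha_{ji}(t)=[v_i(t,x^\ast)-v_j(t,x^\ast)]_+/q_{ji}(x^\ast)$ shows it coincides with $\sum_{j\in\mathbb S}\alpha_{ji}(t)x_j^\ast$, which vanishes by \eqref{eq:mean_field system_stationary_b}; thus $\dot x^\ast=0$, consistent with $x(t)=x^\ast$.

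It remains to prove $x^\ast\in\mathbb{NE}(F)$, which I expect to be the crux. At the stationary solution $\dot x(t)=0$, so, evaluating the identity obtained in the proof of Proposition~\ref{lemma:PC} with $p(t)=v(t,x^\ast)$ and using $v_i(t,x^\ast)-v_j(t,x^\ast)=F_i(x^\ast)-F_j(x^\ast)$,
\begin{align*}
0 = \sum_{i\in\mathbb S} v_i(t,x^\ast)\,\dot x_i = \sum_{i\in\mathbb S}\sum_{j\in\mathbb S} x_j^\ast\,\frac{[F_i(x^\ast)-F_j(x^\ast)]_+^2}{q_{ji}(x^\ast)}.
\end{align*}
Since every summand is nonnegative, each vanishes, so for each $j$ with $x_j^\ast>0$ one gets $[F_i(x^\ast)-F_j(x^\ast)]_+=0$ for all $i$, i.e., $F_j(x^\ast)=\max_{i\in\mathbb S}F_i(x^\ast)$, which is exactly the condition in Definition~\ref{def:NE}. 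The main obstacle is the boundary case $x^\ast\in\partial\Delta$, where some $q_{ji}(x^\ast)$ may be unbounded (as for the replicator or distributed weights in Corollaries~\ref{lemma:link2}--\ref{coro:2}), so that a vanishing summand no longer forces $[F_i(x^\ast)-F_j(x^\ast)]_+=0$; there I would either restrict to $x^\ast\in\mathrm{int}(\Delta)$---consistent with the standing positivity assumption on $q_{ij}$ over $\mathrm{int}(\Delta)$ and with the scope of the corollaries---or, in the distributed case, use \eqref{eq:mean_field system_stationary_a} (which already forces $\max_i F_i(x^\ast)=-\kappa$) and propagate the equality $F_j(x^\ast)=\max_i F_i(x^\ast)$ from the support of $x^\ast$ along the connected migration graph $(\mathbb S,\mathbb L,A)$.
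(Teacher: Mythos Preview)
Your argument is correct and follows the same line as the paper: derive $v(t,x^\ast)=\kappa(t-T)\mathbf 1_n+F(x^\ast)$ from $\dot v=\kappa\mathbf 1_n$ and the terminal condition, then use $\dot x^\ast=0$ together with the positive-correlation identity to force $x_j^\ast[F_i(x^\ast)-F_j(x^\ast)]_+=0$ for all $i,j$. The paper's proof is terser---it cites Definition~\ref{def:two_properties} where you spell out the sum-of-squares computation from Proposition~\ref{lemma:PC}---and, like yours, implicitly relies on $q_{ji}(x^\ast)$ being finite and positive, so your boundary caveat is well taken even though the paper does not address it.
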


\begin{proof}
    From \eqref{eq:mean_field system_stationary_a}, we have:
\begin{align*}
    \dot{v}(t, x^\ast) = \kappa \mathbf{1}_n \implies v(t, x^\ast) = \kappa (t - T) \mathbf{1}_n  + v(T, x^\ast),
\end{align*}
with $v(T, x^\ast) = F(x^\ast)$. From \eqref{eq:mean_field system_stationary_b}, it follows that $V(p(t), x^\ast) = 0$. 
By Definition~\ref{def:two_properties}, this condition implies $x_i^\ast [F_j(x^\ast) - F_i(x^\ast)]_+ = 0, ~ \forall i,j \in \mathbb S$, thereby establishing that $x^\ast$ is a Nash equilibrium. ~\hfill $\blacksquare$
\end{proof}

In the following theorem, we provide a convergence result for a certain class of contractive population games. The proof is provided in Appendix~\ref{proof:cor:convergence}
\begin{theorem} \label{cor:convergence}
Suppose $q_{ij}(x), \, \forall i,j \in \mathbb S$ is continuously differentiable on $\Delta$, and a population game $F$ is contractive and admits an interior Nash equilibrium. Further, assume that $F$ satisfies
\begin{align} \label{eq:strong_contractiveness}
    \left( F(x) - F(y) \right)^\top (x-y) \leq -\epsilon \| x - y \|^2,
\end{align}
for a positive constant $\epsilon$. 
If $x(T/2)$ denotes the population state at time $t=T/2$, then as $T \to \infty$, we have $\|x(T/2) - x^\ast \| \to 0$, where $x^\ast$ is the Nash equilibrium of $F$.  \hfill $\square$
\end{theorem}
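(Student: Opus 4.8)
The plan is to analyze the coupled forward–backward system \eqref{eq:HJ_equation}–\eqref{eq:pairwise} by showing that on a growing time window the pair $(x(t),v(t,x(t)))$ is driven close to the modulo‑addition stationary solution identified in Proposition~\ref{lemma:steadystate}, and then to use the strong contractiveness \eqref{eq:strong_contractiveness} to convert ``close to stationary'' into ``close to $x^\ast$.'' First I would introduce the shifted value $\tilde v_i(t,x(t)) = v_i(t,x(t)) - \kappa(t-T)$, so that $\tilde v$ satisfies a backward equation with no linear drift term and terminal condition $\tilde v_i(T,x(T)) = F_i(x(T))$; the revision rates depend only on the differences $v_j - v_i = \tilde v_j - \tilde v_i$, so $x(t)$ is governed by $\tilde v$ exactly as before. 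The key observation is that near $t=T$ the terminal condition forces $\tilde v(t,x(t)) \approx F(x(t))$, which makes \eqref{eq:pairwise} essentially the standard ED \eqref{eq:evolutionary_dynamics} with payoff $F$; the substantive content of the theorem is that the window over which this approximation is good, centered appropriately, has length growing to infinity, so that by $t = T/2$ the ED has run for an unboundedly long time.

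Next I would set up the Lyapunov argument. Because $F$ is contractive with an interior Nash equilibrium $x^\ast$, the standard energy function for pairwise‑comparison/ED dynamics — for Smith‑type dynamics this is $\Phi(x) = \tfrac12\sum_{i,j} x_i [F_j(x) - F_i(x)]_+^2$ or, more robustly for the general weighted protocol, a function measuring the aggregate excess payoff — is nonincreasing along the ED flow driven by the true payoff $F$, and \eqref{eq:strong_contractiveness} upgrades this to an exponential (or at least uniform) contraction toward $x^\ast$. I would then estimate the perturbation: writing $e_i(t) = \tilde v_i(t,x(t)) - F_i(x(t))$, the backward equation for $\tilde v$ together with the chain rule $\frac{d}{dt} F_i(x(t)) = \nabla F_i(x(t))^\top \dot x(t)$ gives a differential inequality of the form $\|e(t)\| \le C\,(T-t)$ on a neighborhood of $T$, using Lipschitz/boundedness of $F$, $\nabla F$, $q_{ij}$ and $1/q_{ij}$ on $\Delta$ (here the hypothesis that $q_{ij}$ is continuously differentiable on all of $\Delta$, together with the interiority that keeps $x(t)$ bounded away from where $q_{ij}$ could degenerate, is what guarantees these bounds). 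Feeding $\|e(t)\| \le C(T-t)$ into the Lyapunov estimate as a vanishing perturbation of the ED vector field, and running the contraction from $t=T$ backward (equivalently, forward in the reversed time), yields $\|x(T/2) - x^\ast\| \le$ (contraction factor over a window of length $T/2$) $+$ (accumulated perturbation), both of which I would show tend to $0$ as $T\to\infty$: the contraction factor because the effective integration length diverges, and the perturbation term because it is dominated by the small‑$\|e\|$ region near $T$ plus the decaying Lyapunov weight.

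The main obstacle I anticipate is making the perturbation-to-Lyapunov step rigorous for the \emph{general} weight $q_{ij}(x)$ rather than a single named protocol: one needs a Lyapunov function valid for the whole family of optimal revision protocols $\rho_{ij} = [p_j-p_i]_+/q_{ij}(x)$ and a quantitative statement that feeding in a payoff perturbation of size $O(T-t)$ only shifts the decay by a commensurately small amount, all while keeping $x(t)$ in $\mathrm{int}(\Delta)$ so the bounds on $q_{ij}$ hold. A secondary subtlety is that the system is forward–backward coupled, so $x$ and $v$ cannot be solved sequentially; I would handle this by treating the whole analysis on the terminal window where the decoupling error $e(t)$ is provably small, rather than attempting a global fixed‑point estimate. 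If a clean general Lyapunov function is unavailable, the fallback is to prove the bound first for the contractive case via the contractiveness inequality directly applied to $\frac{d}{dt}\tfrac12\|x(t)-x^\ast\|^2$ using positive correlation (Proposition~\ref{lemma:PC}) and the near‑identity $p(t)\approx F(x(t))$, which already gives $\frac{d}{dt}\tfrac12\|x-x^\ast\|^2 \le -\epsilon\|x-x^\ast\|^2 + (\text{error})$ and hence the claimed convergence by Grönwall.
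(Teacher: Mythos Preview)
Your proposal takes a different route from the paper, and the route has a genuine gap. The core of your plan is to control the discrepancy $e(t)=\tilde v(t,x(t))-F(x(t))$ by a Gronwall bound $\|e(t)\|\le C(T-t)$ anchored at the terminal time, and then treat the forward dynamics as a perturbation of the standard ED driven by $F$. But at $t=T/2$ this bound reads $\|e(T/2)\|\le CT/2$, which \emph{diverges} as $T\to\infty$; feeding it into a forward Gronwall estimate for $\|x-x^\ast\|$ leaves an accumulated error of the form $\int_{t_0}^{T/2}e^{-\epsilon(T/2-s)}\,C(T-s)\,ds$, which is itself of order $T$, not $o(1)$. In a forward contraction the large-perturbation region (small $s$) carries the \emph{largest} weight, so your ``decaying Lyapunov weight'' claim is backwards. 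The fallback via $\frac{d}{dt}\tfrac12\|x-x^\ast\|^2$ has the same defect: positive correlation only gives $p^\top V>0$, not $(x-x^\ast)^\top V\le -\epsilon\|x-x^\ast\|^2$, and bridging that gap again requires $p\approx F(x)$ at $t=T/2$, which you have not established.

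The paper instead follows the monotonicity/exponential-trend argument from finite-state MFG (\cite[Theorem~3]{Diogo_2013}). It differentiates the \emph{cross quantity} $(x(t)-x^\ast)^\top\bigl(v(t,x(t))-F(x^\ast)\bigr)$ and, using \eqref{eq:strong_contractiveness} together with the interiority of $x^\ast$ (so $F_j(x^\ast)=F_i(x^\ast)$ for all $i,j$), bounds its time derivative from below by a positive multiple of $f_T(t):=\|x(t)-x^\ast\|^2+\|v(t,x(t))-v_{i_0}(t,x(t))\mathbf 1_n\|^2$, where $i_0=\argmin_i v_i(t,x(t))$. With $F_T(\tau)=\int_{T/2-\tau}^{T/2+\tau}f_T$, this yields $F_T'(\tau)\ge\gamma F_T(\tau)$, hence $F_T(1)\le F_T(T/2-t_0)\,e^{-\gamma(T/2-t_0-1)}$. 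Comparing $v_i=P_i(t,\rho)$ with the trivial protocol $\rho^0\equiv 0$ gives an a~priori polynomial-in-$T$ bound on $v_i-v_{i_0}$, so $F_T(T/2-t_0)$ grows only polynomially and $F_T(1)\to 0$; continuous dependence on data then gives $\|x(T/2)-x^\ast\|\to 0$. The key point is that this argument never needs $v\approx F(x)$: it controls $x-x^\ast$ and $v-v_{i_0}\mathbf 1_n$ \emph{simultaneously} through the cross term, which is exactly what a sequential decouple-then-perturb scheme cannot deliver for a genuinely forward--backward system.
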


\begin{remark} \label{remark:payoff_function_modification}
    For any contractive population game $F$, modifying $F$ by defining $F_\epsilon(x) = F(x) - \epsilon \, (\ln (x_1+\delta) \, \cdots \, \ln (x_n+\delta) )^\top$ with $\epsilon>0$, where $\delta>0$ is a sufficiently small constant. This modification ensures that $F_\epsilon$ satisfies \eqref{eq:strong_contractiveness} and has an interior Nash equilibrium. Let $x^\ast (\epsilon)$ denote the Nash equilibrium of $F_\epsilon$, which also serves as the $\epsilon$-Nash equilibrium of $F$. Since $F$ is a continuous function of $x$, it follows that:
    \begin{align*}
        \lim_{\epsilon \to 0} x^\ast (\epsilon) = x^\ast,
    \end{align*}
    where $x^\ast$ is the Nash equilibrium of $F$.  \hfill $\square$
\end{remark}

\section{Illustrative Examples}
\label{sec:examples}

\begin{algorithm*}
\DontPrintSemicolon
\caption{Compute solution trajectories $(x^*_{[t_0,T]},v^*_{[t_0,T]})$ of \eqref{eq:HJ_equation} and \eqref{eq:pairwise}.} \label{alg:algorithm}
\KwData{$F$, $x_0 \in \Delta$, $t_0, T \in \mathbb{R}_+$ with $T>t_0$, $N \in \mathbb{Z}_+$, $\varepsilon_f \in \mathbb{R}_+$} 
\KwResult{$x^*_{[t_0,T]},v^*_{[t_0,T]}$}

$k \gets 0$, $e \gets \infty$ \;

$x^{(0)}_{[t_0,T]} \gets $ a solution of $\dot{x}(t) = V(p(t),x(t))$ with $x(t_0)=x_0$ and $p(t) = F(x(t))$\;

\While{$(k < N)$ and $(e>\varepsilon_f)$}  {

$v^{(k+1)}_{[t_0,T]} \gets $ a solution of the HJ equation \eqref{eq:HJ_equation} with a fixed population state trajectory $x_{[t_0,T]}$ as  $x_{[t_0,T]} = x^{(k)}_{[t_0,T]}$ \;
$x'_{[t_0,T]} \gets $ a solution of the FK equation \eqref{eq:pairwise} with a fixed payoff vector trajectory $p_{[t_0, T]}$ as $p_{[t_0, T]} = v^{(k+1)}_{[t_0,T]}$ \;
$x^{(k+1)}_{[t_0,T]} \gets (1-a)x^{(k)}_{[t_0,T]} + a x'_{[t_0,T]}$ \;
$e \gets \int_{t_0}^T \|x^{(k)}(t') - x^{(k+1)}(t')\|_2^2 \, \mathrm{d}t'$ \Comment{To check if the fixed-point is reached} \;
$k \gets k + 1$ \;
}
$(x^*_{[t_0,T]},v^*_{[t_0,T]}) \gets (x^{(k+1)}_{[t_0,T]},v^{(k+1)}_{[t_0,T]})$ 
\end{algorithm*}

We provide numerical examples to illustrate the main results of this paper. To compute the solutions for \eqref{eq:HJ_equation} and \eqref{eq:pairwise}, we use Algorithm~\ref{alg:algorithm}, adapted from the literature \cite{Lauriere2023}. The algorithm is specifically designed to find the fixed point of \eqref{eq:HJ_equation} and \eqref{eq:pairwise}.  

The computation begins by initializing the population state trajectory using a conventional game model, $p = F(x)$. The algorithm then alternates between updating the payoff vector trajectory based on the current population state trajectory and vice versa. To stabilize this alternating process, we employ an exponential moving average strategy (Line~6) when updating the population state trajectory. This approach helps mitigate rapid changes and ensures smoother convergence. Each iteration in the algorithm is denoted by the superscript $(k)$.

We note that Algorithm~1 requires solving two ordinary differential equations (Lines~4–5), whose computational cost depends on the length of the time horizon $T$ and the number of strategies $n$. In addition, decreasing the moving average weight $a$ (Line~6) can slow down convergence, thereby increasing the computational time, although it can also improve the accuracy and stability of the iterative process in the algorithm.

We consider the following two population games:
    \paragraph*{\textit{Congestion game}} The payoff function $F$ is defined as:
\begin{align} \label{eq:congestion_game}
    F(x)= \begin{pmatrix} F_1 (x)\\ 
    F_2 (x) \\ F_3 (x) \\ F_4(x) \\ F_5 (x) \\ F_6 (x) \end{pmatrix} = 
    - \begin{pmatrix} 
    2.5 x_1 + x_2 \\ 
    x_1 + 2.5 x_2 + x_3 + 0.5 x_5 \\ 
    x_2 + 2.5 x_3 \\
    2.5 x_4 + x_5 \\ 
    0.5 x_2 + x_4 + 2.5 x_5 + x_6 \\ 
    x_5 + 2.5 x_6
    \end{pmatrix}.
\end{align}
For \eqref{eq:cost_to_go}, we set
$q_{ij}(x(t)) =1, \forall i,j \in \mathbb{S}$, $t_0=0$, and $T=6$. We apply Algorithm~\ref{alg:algorithm} with $a = 0.01$, $N=100$, and $\varepsilon_f=0$.

    \paragraph*{\textit{Rock-Paper-Scissors (RPS) game}} The payoff function $F$ is defined as:
\begin{align} \label{eq:rps_game}
    F(x)
    = \begin{pmatrix} F_1 (x) \\ 
    F_2 (x) \\ F_3 (x) \end{pmatrix} 
    = \begin{pmatrix} - x_2 + x_3 \\ 
    x_1 - x_3 \\ - x_1 + x_2 \end{pmatrix}.
\end{align}
For \eqref{eq:cost_to_go}, we set
$q_{ij}(x(t)) =1, \forall i,j \in \mathbb{S}$, $t_0=0$, and $T=6$. We apply Algorithm~\ref{alg:algorithm} with $a = 0.001$, $N=6000$, and $\varepsilon_f=0$.

\begin{figure}[t!]
    \centering  
    \subfigure[]{
    \includegraphics[width=0.33\textwidth, trim=0cm 0cm 0cm 0cm]{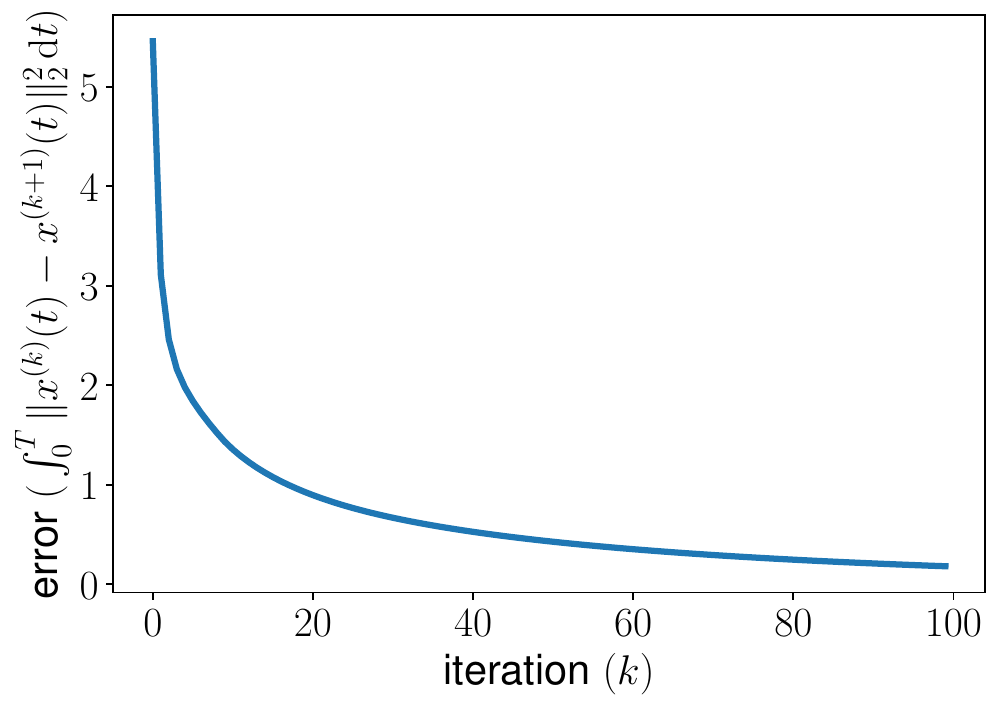} }
    \subfigure[]{\includegraphics[trim={2cm 2cm 0cm 2.5cm}, clip, width=0.32\textwidth]{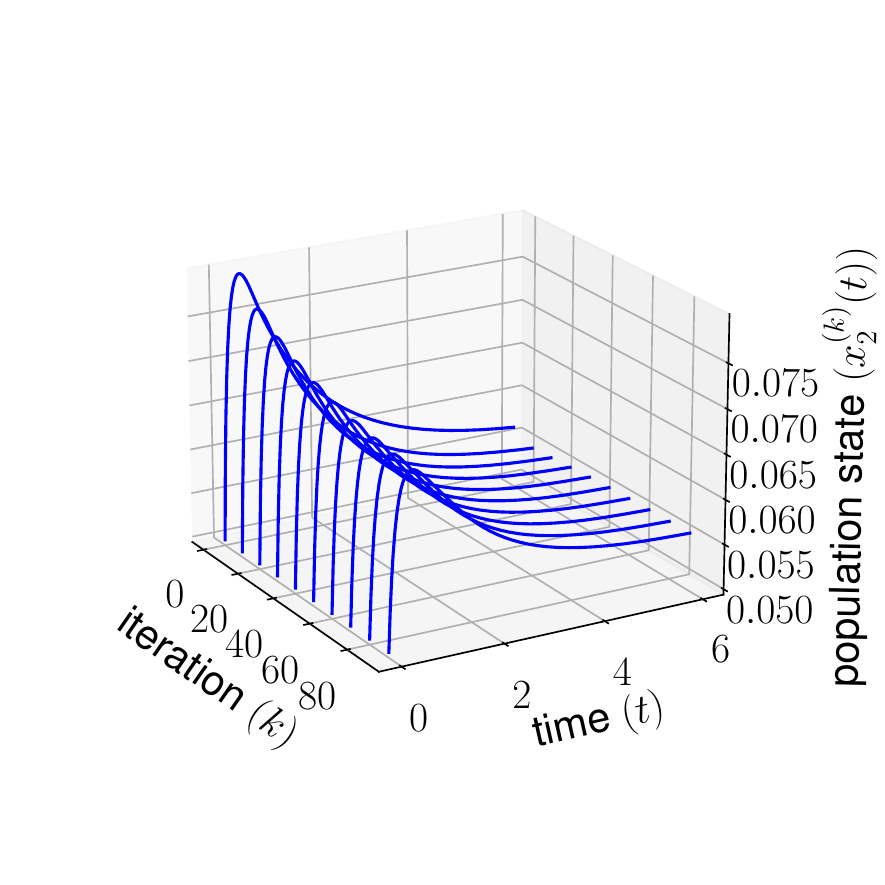}
    \label{fig:congestion_fig_b}}
    \caption{(a) The progression of the error term in Algorithm~\ref{alg:algorithm}, and (b) the resulting trajectory of the population state $x_2^{(k)}(t)$ over various iterations in the congestion game \eqref{eq:congestion_game}.}
    \label{fig:congestion_fig}
\end{figure}

\begin{figure}
    \centering    \includegraphics[width=0.75\linewidth]{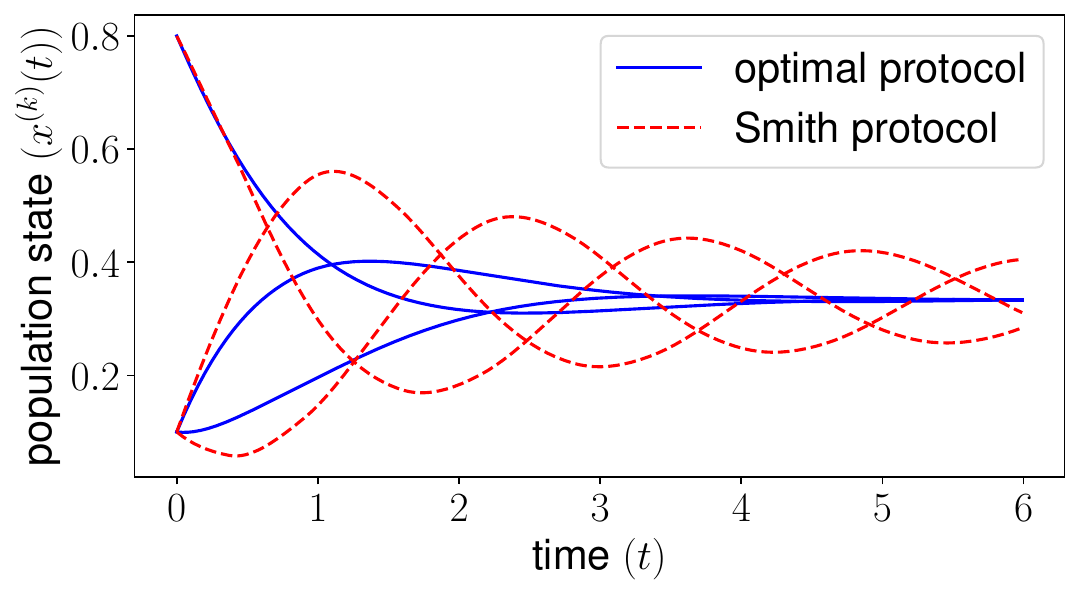}
    \caption{A comparison between population state trajectories determined by the optimal strategy revision protocol and the Smith protocol.}
    \label{fig:RPS_states}
\end{figure}
Fig.~\ref{fig:congestion_fig} illustrates the progression of the error term $\int_{t_0}^T \|x^{(k)}(t') - x^{(k+1)}(t')\|_2^2 \, \mathrm{d}t'$ (Line~7) and the iterative improvements of $x_2^{(k)}(t)$ during the execution of Algorithm~\ref{alg:algorithm} for the congestion game \eqref{eq:congestion_game}. As shown in Fig.~\ref{fig:congestion_fig_b}, the overshoot in $x_2^{(k)}(t)$ decreases with each iteration, leading to a reduced cost of strategy revision and an increase in the long-term payoff \eqref{eq:cost_to_go}.

Fig.~\ref{fig:RPS_states} compares the population state trajectories generated by the optimal strategy revision protocol with those obtained using the Smith ED model in the RPS game \eqref{eq:rps_game}. As shown, the population state trajectory determined by the optimal protocol exhibits reduced oscillations and converges more quickly to the Nash equilibrium of \eqref{eq:rps_game}.

Note that the congestion game \eqref{eq:congestion_game} is a contractive game satisfying the assumptions of Theorem~\ref{cor:convergence}---namely, condition~\eqref{eq:strong_contractiveness} and the existence of an interior Nash equilibrium. On the other hand, the RPS game \eqref{eq:rps_game} as a contractive game under Definition~\ref{def:contractive_games} but does not satisfy condition~\eqref{eq:strong_contractiveness}. Nevertheless, as illustrated in Fig.~\ref{fig:RPS_states}, our framework is still capable of designing the optimal revision protocol in this case, even without employing the technique discussed in Remark~\ref{remark:payoff_function_modification}. This demonstrates its broader applicability to optimal design problems in PG.

\section{Conclusions}
\label{sec:conclusions}

This paper examined the relationship between the MFG framework and the optimal strategy revision process in PG. Specifically, we established a connection between the evolution of agent state distributions in MFG, governed by the forward FK equation, and the ED model in PG. Furthermore, we demonstrated that the backward HJ equation in MFG can be utilized to define the payoff dynamics model, enabling the determination of the payoff vector for the optimal strategy revision in PG. Building on these insights, we showed how existing ED models---including Smith, replicator, and projection dynamics, along with their distributed variants---can be derived by appropriately selecting the finite time-horizon payoff function.

One limitation of the proposed approach is that agents are assumed to have full knowledge of the payoff function $F$ of the underlying population game. As a direction for future work, we plan to investigate learning-based methods in which agents, through repeated interactions within the same game, can learn the optimal strategy revision protocol without relying on this assumption. In such settings, however, the learning process may produce noisy or imperfect observations of the fitness functions. Finally, we aim to further investigate the computational tractability of the proposed algorithm for solving \eqref{eq:HJ_equation} and \eqref{eq:pairwise}, particularly as the number of strategies increases.

\bibliographystyle{unsrt}
\bibliography{references}

\appendix

\subsection{Proof of Theorem~\ref{propos:connection}} \label{proof:propos:connection}
We begin by determining the optimal state transition matrix $\alpha (t)$ within the MFG framework, where the finite time-horizon cost function in \eqref{eq:cost_mfg} is defined with $c(s(t), x(t), \alpha_{s(t)}(t)) = \sum_{j \in \mathbb{S}} \frac{q_{s(t),j}(x(t))}{2} \alpha_{s(t), j}^2 (t) - F_{s(t)}(x(t))$ and $\psi (s(T), x(T)) = -F_{s(T)}(x(T))$.\footnote{Note that, together with the continuous differentiability of $F$, the definitions of $c$ and $\psi$ satisfy Assumption~\ref{assump:mfg} provided that $q_{ij}(x(t))$ is constant or that $x(t)$ remains in $\mathrm{int}(\Delta)$ for all $t \in [t_0, T]$. Accordingly, we conduct the proof within a compact subset of $\mathrm{int}(\Delta)$, containing $x(t), \, t \in [t_0, T]$, in which $q_{ij}(x(t))$ remains strictly positive and bounded.}

To solve the optimal control problem in \eqref{eq:cost_mfg}, we employ the HJ equation \eqref{eq:mean_field system} together with the FP equation \eqref{eq:mean_field system_c}. The Hamiltonian for \eqref{eq:mean_field system} is given by:
\begin{multline*}
H_i(w(t, x(t)),x(t)) = \textstyle\min_{\alpha_i \in \mathbb{R}_{i,0}^n} \bigg( \sum_{j \in \mathbb{S}} \frac{q_{ij}(x(t))}{2} \alpha_{ij}^2  \\ 
 - F_i(x(t)) + \textstyle \sum_{j \in \mathbb{S}} \alpha_{ij} (w_j(t,x(t)) - w_i(t,x(t))) \bigg).
\end{multline*}
Under the constraint that $\alpha_{i}$ must lie in $\mathbb{R}_{i,0}^n$, the optimal state transition $\alpha_{ij} (t)$ is then given by:
\begin{align}
\label{eq:optimal_jump_intensity}
\alpha_{ij}(t) = \frac{[w_i(t,x(t)) - w_j(t,x(t))]_+}{q_{ij}(x(t))},  ~\forall i \ne j.
\end{align}  
Consequently, we derive the Hamiltonian function as follows:
\begin{multline} \label{eq:hamiltonian}
H_i(w(t, x(t)),x(t)) 
\\= - \frac{1}{2} \textstyle\sum_{j \in \mathbb{S}} \dfrac{[w_i(t,x(t)) - w_j(t,x(t))]_+^2}{q_{ij}(x(t))} - F_i(x(t)),
\end{multline}
for all $i \in \mathbb{S}$. 

By exploiting the analogy between \eqref{eq:cost_to_go} and \eqref{eq:cost_mfg}, we observe that the minimization problem in \eqref{eq:cost_mfg} corresponds to the maximization problem in \eqref{eq:cost_to_go}. Consequently, the optimal cost-to-go function $v_i(t,x(t))$ associated with \eqref{eq:cost_to_go} can be expressed as $v_i(t,x(t)) = -w_i(t,x(t))$. Substituting this relation into \eqref{eq:mean_field system} and using \eqref{eq:hamiltonian}, we obtain \eqref{eq:HJ_equation}. Furthermore, by defining the payoff vector as $p_i (t) = v_i(t, x(t))$ and using $\alpha_{ij} (t)$ derived in \eqref{eq:optimal_jump_intensity}, the optimal revision protocol is given by $\rho_{ij} (p(t), x(t)) = \frac{[p_j(t) - p_i(t)]_+}{q_{ij}(x(t))}$. Furthermore, by noting that $\alpha_{ii}(t) = -\sum_{j \in \mathbb{S}\setminus \{i\}} \alpha_{ij}(t)$, we can rewrite \eqref{eq:mean_field system_c} as:
\begin{multline*}
\dot{x}_i(t) = \textstyle\sum_{j \in \mathbb{S}\setminus \{i\}} x_{j}(t) \rho_{ji}(p(t), x(t)) \\ - \textstyle x_{i}(t) \sum_{j \in \mathbb{S}\setminus \{i\}} \rho_{ij}(p(t), x(t)),
\end{multline*}
which is equivalent to \eqref{eq:pairwise}.
This concludes the proof. \hfill $\blacksquare$

\subsection{Proof of Theorem~\ref{cor:convergence}} \label{proof:cor:convergence}

Our proof follows the main arguments presented in \cite[Theorem~3]{Diogo_2013}. According to Proposition~\ref{lemma:steadystate}, the modulo-addition stationary solution is given by $x(t) = x^\ast$ and $v(t, x^\ast) = \kappa  (t-T) \mathbf{1}_n + F(x^\ast), ~ \forall t \in [t_0, T]$, where $x^\ast$ is a Nash equilibrium of $F$.

Using \eqref{eq:strong_contractiveness} together with the property $(x(t) - x^\ast)^\top \mathbf 1_n = 0$, we can derive
\begin{align}
&\frac{\mathrm d}{\mathrm dt} (x(t) - x^\ast)^\top (v(t, x(t)) - \kappa (t-T) \mathbf 1_n - F(x^\ast)) \nonumber \\
&=\frac{\mathrm d}{\mathrm dt} (x(t) - x^\ast)^\top (v(t, x(t)) - F(x^\ast)) \nonumber \\
&\geq \epsilon \|x(t) - x^\ast\|^2 \nonumber \\
&\quad + \frac{1}{2} \!\sum_{i \in \mathbb S} (x_i(t) \!+\! x_i^\ast) \!\sum_{j \in \mathbb S}\! \frac{1}{q_{ij}(x(t))}\! \big( [v_j(t, x(t)) \!-\! v_i(t, x(t))]_+ \nonumber \\
&\qquad - [F_j(x^\ast) - F_i(x^\ast)]_+ \big)^2.
\end{align}
Furthermore, we note that
\begin{align}
&\sum_{i \in \mathbb S} (x_i(t) + x_i^\ast) \sum_{j \in \mathbb S} \frac{1}{q_{ij}(x(t))} \big( [v_j(t, x(t)) - v_i(t, x(t))]_+ \nonumber \\
&\quad - [F_j(x^\ast) - F_i(x^\ast)]_+ \big)^2 \nonumber \\
& \geq \sum_{j \in \mathbb S} (x_{i_0}(t) + x_{i_0}^\ast) \frac{1}{q_{{i_0} j}(x(t))} \big( [v_j(t, x(t)) - v_{i_0}(t, x(t))] \nonumber \\ 
&\qquad - [F_j(x^\ast) - F_{i_0}(x^\ast)]_+ \big)^2,
\end{align}
where $i_0 = \argmin_{i \in \mathbb S} v_{i}(t, x(t))$. 
Hence, it follows that
\begin{align}
&\frac{\mathrm d}{\mathrm dt} (x(t) - x^\ast)^\top (v(t, x(t)) - F(x^\ast)) \geq \epsilon \|x(t) - x^\ast\|^2 \nonumber \\
&+\frac{1}{2} \sum_{j \in \mathbb S} (x_{i_0}(t) + x_{i_0}^\ast ) \frac{1}{q_{{i_0} j}(x(t))} \big( [v_j(t, x(t)) - v_{i_0}(t, x(t))] \nonumber \\ 
&\quad - [F_j(x^\ast) - F_{i_0}(x^\ast)]_+ \big)^2. 
\end{align}

Since $x^\ast$ is an interior Nash equilibrium, we have $F_j(x^\ast) = F_{i_0}(x^\ast)$ for all $j \in \mathbb S$. Hence, using the facts that $(x(t) - x^\ast)^\top \mathbf 1_n = 0$ and $q_{ij}(x(t))$ is bounded, we obtain
\begin{align}
&\frac{\mathrm d}{\mathrm dt} (x(t) - x^\ast) ^\top (v(t, x(t)) - v_{i_0}(t, x(t)) \mathbf 1_n) \nonumber \\
&\geq \epsilon \|x(t) - x^\ast\|^2 + \sum_{j \in \mathbb S} \gamma_j [v_j(t, x(t)) - v_{i_0}(t, x(t))]^2
\end{align}
for some positive constants $\gamma_j$. Consequently, we have
\begin{align} \label{eq:key_inequaility_1}
&\| x(T/2+\tau) - x^\ast \|^2 \nonumber \\
&\quad + \| v(T/2\!+\!\tau, x(T/2\!+\!\tau)) - v_{i_0}(T/2\!+\!\tau, x(T/2\!+\!\tau)) \mathbf 1_n \|^2 \nonumber \\
&\quad +\! \| x(T/2-\tau) \!-\! x^\ast \|^2 \nonumber \\
&\quad + \| v(T/2\!-\!\tau, x(T/2\!-\!\tau)) \!-\! v_{i_0}(T/2\!-\!\tau, x(T/2\!-\!\tau)) \mathbf 1_n \|^2 \nonumber \\
&\geq 2 \int_{T/2-\tau}^{T/2+\tau} \frac{\mathrm d}{\mathrm dt} (x(t) - x^\ast) ^\top (v(t, x(t)) - v_{i_0}(t, x(t)) \mathbf 1_n) \, \mathrm dt \nonumber \\
&\geq \gamma \!\int_{T/2-\tau}^{T/2+\tau} \! \left( \|x(t) \!-\! x^\ast\|^2 \!+\! \|v(t, x(t)) \!-\! v_{i_0}(t, x(t)) \mathbf 1_n \|^2 \right) \mathrm dt
\end{align}
for some positive constant $\gamma$.

Define $F_T (\tau) = \int_{T/2-\tau}^{T/2+\tau} f_T(s) \, \mathrm d s, ~\tau \in [0, T/2-t_0]$, where $f_T(s) = \|x(s) - x^\ast\|^2 + \|v(s, x(s)) - v_{i_0} (s, x(s)) \mathbf 1_n \|^2$. From \eqref{eq:key_inequaility_1}, we obtain
\begin{align}
    \gamma F_T(\tau) \leq f_T(T/2+\tau) + f_T(T/2-\tau) = \frac{\mathrm d}{\mathrm d\tau} F_T(\tau),
\end{align}
which is equivalent to $\frac{\mathrm d}{\mathrm d\tau} \ln F_T(\tau) \geq \gamma$. Integrating this inequality over $[1, T/2-t_0]$ yields
\begin{align} \label{eq:F_inequality}
    F_T(1) \leq F_T(T/2-t_0) e^{-\gamma (T/2-t_0-1)}.
\end{align}

Let $\rho$ be the strategy revision protocol that attains the maximum of the finite time-horizon payoff $P_i(t_0, \rho)$ in \eqref{eq:cost_to_go}, so that $P_i(t, \rho) = v_i(t, x(t))$ for all $t \in [t_0, T]$ and $i \in \mathbb{S}$. Let $\rho^0$ denote the trivial revision protocol with $\rho_{ij}^0 = 0$ for all $i,j \in \mathbb S$. Since $F$ is continuous on the compact set $\Delta$, there exist nonnegative constants $L,M$ such that, for all $t \in [t_0, T]$ and $i \in \mathbb{S}$, it holds that $v_i(t, x(t)) = P_i(t, \rho) \geq P_i(t, \rho^0) \geq -L (T - t) - M$. By the same continuity (and compactness) argument, there exist nonnegative constants $L', M'$ with $v_i(t, x(t)) \leq L' (T-t) + M'$, for all $t \in [t_0, T]$ and $i \in \mathbb S$. Hence, for any $t \in [t_0, T]$ and $i \in \mathbb S$, it holds that $0 \leq v_i (t, x(t)) - v_{i_0} (t, x(t)) = v_i (t, x(t)) - \min_{i \in \mathbb S} v_{i} (t, x(t)) \leq \bar L T + \bar M$ for suitable nonnegative constants $\bar L, \bar M$. 

Consequently, using the fact that $x(t), x^\ast \in \Delta$, we obtain
\begin{align} \label{eq:key_inequaility_2}
    &\int_{t_0}^{T-t_0} \left( \|x(t) - x^\ast\|^2 + \|v(t, x(t)) - v_{i_0} (t, x(t)) \mathbf 1_n\|^2 \right) \, \mathrm dt \nonumber \\
    &\leq 2T + n \int_{0}^{T} \left( \bar L^2 T^2 + 2 \bar L \bar M T + \bar M^2 \right) \, \mathrm dt \nonumber \\
    &\leq K_3 T^3 + K_2 T^2 + K_1 T ,
\end{align}
for some nonnegative constants $K_1, K_2, K_3$. 

Using \eqref{eq:F_inequality} and \eqref{eq:key_inequaility_2}, we conclude that 
\begin{multline} \label{eq:key_inequaility_convergence}
    \lim_{T \to \infty} \int_{T/2-1}^{T/2+1} f_T(s) \, \mathrm d s \\
    \leq  \lim_{T \to \infty} (K_3 T^3 + K_2 T^2 + K_1 T) e^{-\gamma (T/2-t_0-1)} = 0.
\end{multline}

Eq. \eqref{eq:key_inequaility_convergence} implies that there exists a time instant $t(T) \in [T/2-1, T/2+1]$ such that $f_T(t(T)) \to 0$ as $T \to \infty$. By the continuous dependence of the solutions to \eqref{eq:HJ_equation} and \eqref{eq:pairwise} on their initial conditions (e.g., \cite[Theorem~3.4]{khalil_nonlinear_2002}), it follows that $\lim_{T \to \infty} f_T(T/2+1) = 0$, which, in turn, implies $\lim_{T \to \infty} \|x(T/2) - x^\ast \| = 0$. \hfill $\blacksquare$

\end{document}